\documentclass[UKenglish,cleveref,autoref,thm-restate,runningheads]{llncs}
%This is a template for producing LIPIcs articles. 
%See lipics-manual.pdf for further information.
%for A4 paper format use option "a4paper", for US-letter use option "letterpaper"
%for british hyphenation rules use option "UKenglish", for american hyphenation rules use option "USenglish"
%for section-numbered lemmas etc., use "numberwithinsect"
%for enabling cleveref support, use "cleveref"
%for enabling autoref support, use "autoref"
%for anonymousing the authors (e.g. for double-blind review), add "anonymous"
%for enabling thm-restate support, use "thm-restate"
\usepackage{hyperref}
\usepackage{amsmath}
\usepackage[x11names,table]{xcolor}
\usepackage{todonotes} 
\usepackage[T1]{fontenc}
\usepackage{txfonts} % \medbullet, \medcirc
\usepackage{CJKutf8}
\usepackage{multirow} 
\usepackage{wrapfig}
\usepackage{enumitem}
\usepackage{environ}
\usepackage{wrapfig}
\usepackage{subcaption}
\usepackage{tabularx}

\captionsetup{compatibility=false}
% Graphs (e.g., hypercube, permutohedron, flip graphs, etc)
\newcommand{\graph}[1]{\mathsf{#1}}

 % Square grid graph
 % Rectangular grid graph
\newcommand{\CubeGraph}[1]{\graph{H}_{#1}} % Hypercube

 % Permutohedron

 % Associahedron
\newcommand{\NCGraph}[1]{\graph{S}^{\times}_{#1}}

% Decision problems. 
\newcommand{\TUPLE}{{\textsc{2TupleGC}}}
\newcommand{\TUPLEC}{{\textsc{2TupleGC'}}}
\newcommand{\BITSTRING}{{\textsc{BitstringGC}}}
\newcommand{\GRID}{{\textsc{GridHamPath}}}
\newcommand{\COMBSWAP}{{\textsc{CombSwapGC}}}
\newcommand{\COMBTRANS}{{\textsc{CombTransGC}}}
\newcommand{\COMBCOMP}{{\textsc{CombCompGC}}}
\newcommand{\COMBREV}{{\textsc{CombRevGC}}}
\newcommand{\PERMSWAP}{{\textsc{PermSwapGC}}}
\newcommand{\PERMTRANS}{{\textsc{PermTransGC}}}
\newcommand{\PERMREV}{{\textsc{PermRevGC}}}
\newcommand{\PERMROT}{{\textsc{PermRotGC}}}
\newcommand{\PERMJUMP}{{\textsc{PermJumpGC}}}

\newcommand{\STEDGEEXG}{{\textsc{SpanningTreeGC}}}
\newcommand{\PMALTCYCLE}{{\textsc{PerfectMatchingGC}}}
\newcommand{\NCREF}{{\textsc{NCSetPartRefGC}}}
\newcommand{\SPREF}{{\textsc{SetPartRefGC}}}

% Objects
\newcommand{\objSet}[1]{\mathbb{#1}}
\newcommand{\BINARY}[1]{\objSet{B}_{#1}}
\newcommand{\COMBOS}[2]{\objSet{B}_{#1}^{#2}}

\newcommand{\PERMUTE}[1]{\objSet{P}_{#1}}  
\newcommand{\PEAKLESSPERM}[1]{\objSet{P}^{\lor}_{#1}}

\newcommand{\NC}[1]{\objSet{S}^{\times}_{#1}}

\newcommand{\SPANNING}[1]{\objSet{ST}_{#1}}
\newcommand{\PMATCHING}[1]{\objSet{PM}_{#1}}
% Comment macros

% Math

\usepackage{todonotes}

% Individual flips aka operations

% Set of flips aka operations

    % swaps aka adjacent-transpositions
    % transpositions
     % substring reversals
    % substring complements (for binary/combos)
     % permutation jumps

%Other macrosc

% Mappings

%Graph parameter

% Caligraphic
\newcommand{\cO}{\mathcal{O}}

% Blackboard bold
\newcommand{\NN}{\mathbb{N}}

% Text colours
\newcommand{\darkBlue}[1]{{\color{Blue4}#1}}

\newcommand{\darkRed}[1]{{\color{Red4}#1}}
\newcommand{\darkGreen}[1]{{\color{Green4}#1}}
% \newcommand{\blue}[1]{{\color{blue}#1}}
% \newcommand{\red}[1]{{\color{red}#1}}

% Padding and bits macros
\setlength\fboxsep{0pt}
\newcommand{\padding}[1]{\darkRed{#1}}

\newcommand{\bits}[1]{\darkBlue{#1}}

% Problem environment

\makeatletter
\newcommand{\problemtitle}[1]{\gdef\@problemtitle{#1}}% Store problem title
\newcommand{\probleminput}[1]{\gdef\@probleminput{#1}}% Store problem input
\newcommand{\problemquestion}[1]{\gdef\@problemquestion{#1}}% Store problem question
\NewEnviron{problem}{
  \problemtitle{}\probleminput{}\problemquestion{}% Default input is empty
  \BODY% Parse input
  \par\addvspace{.5\baselineskip}
  \noindent
  \begin{tabularx}{\textwidth}{@{\hspace{\parindent}} l X c}
    \multicolumn{2}{@{\hspace{\parindent}}l}{\underline{\@problemtitle}} \\% Title
    \textbf{Input:} & \@probleminput \\% Input
    \textbf{Question:} & \@problemquestion% Question
  \end{tabularx}
  \par\addvspace{.5\baselineskip}
}
\makeatother 

\graphicspath{{./binary/}{./catalan/}{./combinations/}{./doodles/}{./graphics/}{./graphs/}{./grid/}{./hashi/}{./necklaces/}{./noncrossing/}{./permutations}{./polyominoes/}{./reductions/}{./set/}{./spanning/}{./tilings/}{./yesno}}%helpful if your graphic files are in another directory

\bibliographystyle{plainurl}% the mandatory bibstyle

\begin{document}
\title{On the Hardness of Gray Code Problems for Combinatorial Objects
% \thanks{Supported by organization x.}
}

% \author{Arturo Merino\inst{1}\orcidID{0000-0002-1728-6936} \and
% Namrata\inst{2}\orcidID{0000-0002-6582-4196} \and
% Aaron Williams\inst{3}\orcidID{0000-0001-6816-4368}}

\author{Arturo Merino\inst{1} \and
Namrata\inst{2} \and
Aaron Williams\inst{3}}
\authorrunning{A. Merino, Namrata, A. Williams} 

\institute{University of Saarland and Max Planck Institute for Informatics, Germany \\
\email{merino@cs.uni-saarland.de}\and
University of Warwick, Department of Computer Science, England\\
\email{namrata@warwick.ac.uk} \and
Williams College, Department of Computer Science, United States\\
\email{aaron.williams@williams.edu}
}
\maketitle   
%TODO mandatory: add short abstract of the document
\begin{abstract}
Can a list of binary strings be ordered so that consecutive strings differ in a single bit?
Can a list of permutations be ordered so that consecutive permutations differ by a swap?
Can a list of non-crossing set partitions be ordered so that consecutive partitions differ by refinement?
These are examples of \emph{Gray coding problems}:
Can a list of combinatorial objects (of a particular type and size) be ordered so that consecutive objects differ by a flip (of a particular type)?
For example, $000,001,010,100$ is a no instance of the first question, while $1234,1324,1243$ is a yes instance of the second question due to the order $12\overline{43}, 1\overline{23}4, 1324$.
We prove that a variety of Gray coding problems are NP-complete using a new tool we call a \emph{Gray code reduction}. 
\end{abstract}

\section{Introduction}
\label{sec:intro}

In a 1947 patent application, Bell Labs engineer Frank Gray devised an order of the~$2^n$ binary strings of length $n$ in which consecutive strings differ by flipping a single bit (i.e., they have Hamming distance one) \cite{gray1953pulse}. 
He referred to the order as \emph{reflected binary code} due to its recursive structure. 
Although the order had previously been observed by others, including another Bell Labs engineer George R. Stibitz \cite{stibitz1943binary}, the order became known as the \emph{binary reflected Gray code (BRGC)}, or simply, the \emph{Gray code}.
% The same order is also implicit in earlier puzzles including \emph{baguenaudier} (``time-waster'') and the \emph{Towers of Hanoi}.

While Bell Labs was able to solve their ordering problem several times, similar pursuits are often quite challenging.
For example, the well-studied \emph{middle levels conjecture} \cite{buck1984gray} asked if the same type of ordering exists for the binary strings of length $2k+1$ with either $k$ or $k+1$ copies of $1$.
Knuth gave this conjecture a difficulty rating of 49/50 \cite{knuth2011art} before it was settled in the affirmative by Mütze \cite{mutze2016proof}, with subsequent work simplifying \cite{mutze2023book}, specializing \cite{merino2022combinatorial}, and generalizing \cite{gregor2023central} \cite{merino2023kneser} the result.

Centuries earlier, bell-ringers developed an order of the $n!$ permutations of $[n] = \{1,2,\ldots,n\}$ (viewed as strings) where consecutive permutations differ by a \emph{swap} (or \emph{adjacent-transposition}) meaning that two neighboring symbols are exchanged \cite{duckworth1668tintinnalogia}.
\emph{Plain changes} was rediscovered independently by Johnson \cite{johnson1963generation}, Trotter \cite{trotter1962algorithm}, and Steinhaus \cite{steinhaus1979one} in the 1960s for its use in the efficient generation of permutations by computer.

In general, when presented with a combinatorial object and a flip operation, one may ask for an order in which successive objects differ by a flip.
Suitable orders are sometimes referred to as \emph{minimal change orders} or \emph{combinatorial Gray codes}.
Academic surveys have been written by Savage \cite{savage1997survey} and more recently M\"{u}tze \cite{mutze2022combinatorial}, with Ruskey \cite{ruskey2003combinatorial} and Knuth \cite{knuth2011art} devoting extensive textbook coverage to the subject.
Despite the long history of the subject, there are still natural Gray code questions that haven't been answered or even posed.
For example, in Section \ref{sec:reductions} we'll consider such a question involving non-crossing set partitions.

\subsection{Gray Codes and Computational Complexity}
\label{sec:intro_complexity}

When Gray codes are mixed with computational complexity the focus is typically on \emph{generation problems}:
How efficiently can a particular order can be generated?
For example, Ehrlich's well-known paper \cite{ehrlich1973loopless} provides \emph{loopless algorithms} for the binary reflected Gray code and plain changes using the \emph{shared object model}.
In other words, one instance of the object is shared between the generation algorithm and the application, and it is modified in worst-case $O(1)$-time to create the next instance.
More recent work has focused on limiting generation algorithms to constant \emph{additional memory}~\cite{stevens2014coolest} \cite{liptak2023constant}.

We instead show that there are computationally hard existence problems that underlie the problems solved by Bell Labs engineers, bell-ringers, and many others throughout history.
More specifically, we consider \emph{existence problems} like the following: 
\begin{enumerate}
    \item[{\bfseries Q1}] Can a list of binary strings be ordered so that consecutive strings differ by a~bitflip?
    \item[{{\bfseries Q2}}] Can a list of permutations of be ordered so that consecutive strings differ by a~swap?
\end{enumerate}
For example, $000,001,010,100$ is a no instance of {\bfseries Q1}, while $1234,1324,1243$ is a yes instance of {\bfseries Q2} due to the order $12\overline{43}, 1\overline{23}4, 1324$.
Note that in these decision problems the type of object and flip operation is fixed, and the input is the list of objects under consideration.
To be clear, each object in the list is provided as part of the input, so the size of the input increases along with the number of objects in the list.%
\footnote{Conceptually, the input could be described as a subset of the objects.
However, subsets of an $n$-set are often encoded as $n$-bit incidence vectors, and we want to avoid this misinterpretation.}

We refer to these existence problems as \emph{Gray coding problems}, with the connotation that we are trying to \emph{do} something to the list of objects. %, like graph color\emph{ing} problems.
We consider classic combinatorial objects including binary strings, permutations, combinations, (non-crossing) set partitions, and graphs.
In each case, we identify at least one flip operation for which the Gray coding problem is NP-complete (including {\bfseries Q1} and~{\bfseries Q2}).

\subsection{Outline}
\label{sec:intro_outline}

In Section \ref{sec:first}--\ref{sec:second} we establish that two specific Gray coding problems are NP-complete.
Then Section \ref{sec:reductions} introduces our notion of a Gray code reduction.
Sections \ref{sec:combos}--\ref{sec:graphs} use these reductions to obtain additional hardness results for a variety of combinatorial objects.
% our framework with results summarized in Table~\ref{tab:objects}.
% These application sections are designed to quickly cover a wide variety of results, so some of the proofs omit polynomial-time and NP membership arguments.
% We also draw the hypercube $\CubeGraph{3}$ using layers starting in Section \ref{sec:combos}.
% We also note that every conclusion of NP-completeness is strong NP-completeness, due to the strong NP-completeness of Theorem \ref{thm:gridGraphPath}.
Final remarks are contained in Section \ref{sec:final}.

\section{A First NP-Complete Problem}
\label{sec:first}

In this section, we discuss a first Gray coding problem that is NP-complete.

\subsection{$2$-Tuple Gray Codes}
\label{sec:first_2tuple}

% \todo{Aaron: perhaps introduce \emph{standard form} to mean includes $0$.}

In the \emph{2-tuple Gray coding problem}, we are given some integer 2-tuples, and we want to decide if we can order the 2-tuples such that consecutive 2-tuples differ only in $\pm 1$ in one of the coordinates.
More formally, we use $\PERMUTE{m}$ to denote the permutations of length $m$ and have the following problem. 

\begin{problem}
    \problemtitle{\TUPLE}
    \probleminput{A list $L$ of $m$ integer 2-tuples $(a_1,b_1),\dots, (a_m,b_m) \in \NN\times \NN$.}
    \problemquestion{Is there a $\pm 1$ Gray code for $L$? 
 In other words, is there a permutation $\pi \in \PERMUTE{m}$ such that $|a_{\pi(i)}-a_{\pi(i+1)}|+|b_{\pi(i)}-b_{\pi(i+1)}|=1$ for every $i \in [m-1]$?}
\end{problem}

For technical reasons, we also consider a version of $\TUPLE$ where the integers have no gaps between them. 
We say that a list of integer 2-tuples $(a_1,b_1),\dots, (a_m,b_m)$ is \emph{continuous} if the set of values that $a_i$ and $b_i$ take for $i \in [m]$ are consecutive integers starting from $1$; i.e., $\{a_i \mid i \in [m]\} = [\max_{i\in [m]} a_i]$ and $\{b_i \mid i \in [m]\}=[\max_{i \in I} b_i]$.

\begin{problem}
    \problemtitle{\TUPLEC}
    \probleminput{A continuous list $L$ of $m$ integer 2-tuples $(a_1,b_1),\dots, (a_m,b_m) \in \NN\times \NN$.}
    \problemquestion{Is $\TUPLE(L)$ true?}
\end{problem}

\subsection{Grid Graph Hamiltonicity}
\label{sec:first_grid}

Hamilton Path problems have been central to evolution of computational complexity, dating back to Karp's initial list of 21 NP-complete problems \cite{karp1972reducibility}.

A \emph{grid graph} is a graph, where the vertex set is given by some integer 2-tuples $L = \{(a_1,b_1),\dots, (a_m,b_m)\}$ and there are edges between all pairs of 2-tuples that differ by 1 on a single coordinate.
Since the grid graph is completely defined by the integer 2-tuples, we denote by $\mathop{grid}(L)$ the unique grid graph that has $L$ as vertices.

Of particular relevance to us, is the powerful sharpening of this result by Itai, Papadimitriou, and Szwarcfiter which shows hardness for Hamiltonian paths problems on grid graphs \cite{itai1982hamilton}.
More formally, the following problem is hard.

\begin{problem}
    \problemtitle{\GRID}
    \probleminput{A list $L$ of $m$ integer 2-tuples $(a_1,b_1),\dots, (a_m,b_m) \in \NN\times \NN$.}
    \problemquestion{Is there a Hamilton path in $\mathop{grid}(L)$?}
\end{problem}

% \aaron{make a (footnote?) comment that the input of our problem is 2-tuples, whereas the original problem in \cite{itai1982hamilton} did not clarify what the input is.}

\begin{theorem}[\cite{itai1982hamilton}] \label{thm:gridGraphPath}
$\GRID$ is NP-complete.
\end{theorem}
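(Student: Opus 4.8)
Since Theorem~\ref{thm:gridGraphPath} is attributed to Itai, Papadimitriou, and Szwarcfiter~\cite{itai1982hamilton}, the plan is really to recall the structure of their argument rather than reprove it from scratch. Membership in NP is immediate: a purported Hamilton path is just a permutation of the $m$ tuples, and one verifies in time $O(m)$ that every consecutive pair differs by exactly $1$ in a single coordinate (hence is an edge of $\mathop{grid}(L)$) and that all $m$ vertices are used. So the content lies entirely in the NP-hardness.

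For hardness, I would reduce from Hamiltonicity in a suitably restricted class of planar graphs --- concretely, the Hamiltonian cycle problem in planar bipartite graphs of maximum degree $3$, which is itself NP-complete. First I would fix a rectilinear (orthogonal) embedding of the input graph $G$ into a coarse integer grid, placing each vertex of $G$ at a grid point and routing each edge of $G$ as an axis-parallel path of grid segments, with all features separated by a large enough constant spacing. This is where planarity and bounded degree are used: degree at most $3$ keeps the local pictures simple, and planarity guarantees that the edge-paths route without crossings.

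Next I would replace each vertex and each edge of the embedded graph by a grid-graph gadget --- a small induced block of grid points --- engineered so that any Hamilton path of the resulting grid graph $\mathop{grid}(L)$ is forced to traverse the gadgets in a way that encodes a Hamilton cycle of $G$: an edge gadget is either \emph{used} (the path passes through it, corresponding to selecting that edge of $G$) or \emph{bypassed}, while the vertex gadgets enforce the degree-$2$ condition of a Hamilton cycle at each original vertex. One then argues both directions: a Hamilton cycle of $G$ yields a Hamilton path of the grid graph by concatenating the corresponding gadget traversals, and conversely any Hamilton path of the grid graph, being forced through the gadgets, projects back to a Hamilton cycle of $G$. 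Since the embedding and all gadgets have polynomial size, the reduction runs in polynomial time.

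The main obstacle is the gadget design and its correctness, not the high-level reduction: one must build vertex and edge gadgets whose internal grid structure admits no ``cheating'' Hamilton path --- every locally valid traversal must correspond to a globally consistent choice of edges --- and then verify the forcing by a finite case analysis of how a path can enter and leave each gadget. Getting the parity bookkeeping right is the delicate part, since grid graphs are bipartite under the coordinate-sum coloring, and the sizes of the two color classes constrain which paths can exist; matching these counts across all gadgets is what makes the construction go through.
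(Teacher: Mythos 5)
The paper does not prove this statement at all --- it is imported verbatim from Itai, Papadimitriou, and Szwarcfiter~\cite{itai1982hamilton} --- and your outline accurately recalls the structure of their argument: NP membership by certificate checking, and NP-hardness by reducing from Hamiltonicity in planar bipartite graphs of maximum degree~$3$ via a rectilinear grid embedding with vertex and edge gadgets whose correctness rests on the bipartite parity bookkeeping. Since you are describing essentially the same (cited) proof rather than a new route, and you candidly flag that the gadget construction itself is the unverified part, there is nothing to correct here.
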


\subsection{Hardness Results}
\label{sec:first_results}

The problem $\GRID$ can be easily translated into an equivalent $\TUPLE$ problem.
In fact, they are essentially the same problem.

\begin{corollary}
    $\TUPLE$ is NP-complete.
\end{corollary}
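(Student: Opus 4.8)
The plan is to establish both membership in NP and NP-hardness, with the latter following from an essentially trivial (identity) reduction from $\GRID$, whose hardness is supplied by Theorem~\ref{thm:gridGraphPath}. The guiding observation, already hinted at in the surrounding text, is that a $\pm 1$ Gray code for $L$ and a Hamilton path in $\mathop{grid}(L)$ are literally the same object once the definitions are unwound.

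First, to see that $\TUPLE \in \mathsf{NP}$, I would take the permutation $\pi \in \PERMUTE{m}$ itself as a certificate. Given $\pi$, one verifies in time polynomial in the input size that $|a_{\pi(i)}-a_{\pi(i+1)}|+|b_{\pi(i)}-b_{\pi(i+1)}|=1$ for each $i \in [m-1]$; this is $m-1$ arithmetic comparisons on the supplied integers, so the certificate is polynomial-size and checkable in polynomial time.

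Next, for hardness I would reduce $\GRID$ to $\TUPLE$ via the identity map: given a $\GRID$ instance $L$, output the very same list $L$ as a $\TUPLE$ instance. Correctness reduces to a single observation. By definition, two tuples are adjacent in $\mathop{grid}(L)$ exactly when they differ by $1$ in a single coordinate, i.e., when their Manhattan distance equals $1$; and a Hamilton path is an ordering $\pi$ of the vertices in which every consecutive pair is adjacent. Unwinding both definitions, an ordering $\pi$ is a Hamilton path in $\mathop{grid}(L)$ if and only if $|a_{\pi(i)}-a_{\pi(i+1)}|+|b_{\pi(i)}-b_{\pi(i+1)}|=1$ for all $i \in [m-1]$, which is exactly the defining condition of a $\pm 1$ Gray code. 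Hence the two problems share the same yes-instances, and the reduction is correct and runs in linear time.

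I do not expect a genuine obstacle here; the only point needing care is the treatment of repeated tuples. The grid graph is defined on a vertex \emph{set}, so its vertices are distinct, whereas $\TUPLE$ nominally receives a list. Since the $\GRID$ instances we reduce from have pairwise distinct tuples, this mismatch never arises and the correspondence above is exact. I would simply remark on this to keep the argument clean, and conclude that $\TUPLE$ is NP-complete.
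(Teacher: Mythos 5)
Your proof is correct and follows essentially the same route as the paper: the paper's own argument is exactly the identity reduction, observing that the two problems have the same input and that an ordering is a Hamilton path of $\mathop{grid}(L)$ if and only if it is a $\pm 1$ Gray code for $L$. Your additional remarks on NP membership and on the list-versus-set issue are fine but not needed beyond what the paper states.
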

\begin{proof}
    We note that both problems have the same input, and $L=(a_1,b_1),\dots,(a_m,b_m)$ is a Hamilton path of $\mathop{grid}(L)$ if and only if $L$ is a $\pm 1$ Gray code.
    \qed
\end{proof}
    
Furthermore, $\TUPLE$ is hard even when we restrict the input to be continuous.
Intuitively, non-continuous inputs give rise to disconnected grid graphs or can be translated to a continuous instance.
We have the following theorem.

\begin{theorem}
    $\TUPLEC$ is NP-complete.
\end{theorem}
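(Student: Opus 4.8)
The plan is to prove membership in NP and then hardness by a polynomial reduction from $\TUPLE$, which is NP-complete by the preceding corollary. Membership is immediate: a continuous instance is in particular a $\TUPLE$ instance, and a permutation $\pi \in \PERMUTE{m}$ witnessing the $\pm 1$ Gray code is a certificate checkable in polynomial time. So essentially all of the work lies in the reduction $\TUPLE \le_p \TUPLEC$.

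Given a $\TUPLE$ instance $L = (a_1,b_1),\dots,(a_m,b_m)$, write $A = \{a_i : i \in [m]\}$ and $B = \{b_i : i \in [m]\}$. I would first establish the key structural fact: if $A$ or $B$ fails to be an interval of consecutive integers (i.e.\ has a gap strictly between its minimum and maximum), then $\mathop{grid}(L)$ is disconnected. The reason is that every edge of $\mathop{grid}(L)$ changes a single coordinate by exactly $1$, so any path from a vertex with $a$-coordinate below a missing value $k$ to a vertex with $a$-coordinate above $k$ would have to pass through a vertex whose $a$-coordinate equals $k$, of which there are none; the same argument applies to $B$. Since a disconnected graph on $m \ge 2$ vertices admits no Hamilton path, $\TUPLE(L)$ is false in this case. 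This dichotomy drives a two-case reduction. If $A$ or $B$ has a gap, I output a fixed continuous no-instance, for instance the four tuples with $\mathop{grid}(L_0) = K_{1,3}$ given by center $(2,2)$ and leaves $(1,2),(3,2),(2,1)$: one checks $A = [3]$, $B = [2]$, so $L_0$ is continuous, and a star with three leaves has no Hamilton path, so $L_0$ is a genuine no-instance of $\TUPLEC$.

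Otherwise $A$ and $B$ are both intervals, and I translate each coordinate so that its minimum becomes $1$, replacing $(a_i,b_i)$ by $(a_i - \min A + 1,\; b_i - \min B + 1)$ to obtain $L'$. Adding a constant to all first (resp.\ second) coordinates leaves every difference $\abs{a_{\pi(i)}-a_{\pi(i+1)}}$ and $\abs{b_{\pi(i)}-b_{\pi(i+1)}}$ unchanged, so the translation is a bijection preserving the $\pm 1$ Gray code condition (equivalently, a grid isomorphism $\mathop{grid}(L) \cong \mathop{grid}(L')$). Moreover $L'$ is now continuous by construction. Hence $\TUPLEC(L') = \TUPLE(L') = \TUPLE(L)$, and the reduction is correct; both cases are plainly computable in polynomial time.

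The main obstacle, though a mild one, is the disconnection lemma: one must argue carefully that a gap in the coordinate values genuinely separates $\mathop{grid}(L)$, relying on the fact that grid edges move exactly one coordinate by $\pm 1$. Everything else — NP membership, verifying the translation is a grid isomorphism, and checking that the fixed star is a continuous no-instance — is routine. The only degenerate point to dispatch is $m \le 1$, which is a trivial yes-instance and can be mapped to any fixed continuous yes-instance such as the single tuple $(1,1)$.
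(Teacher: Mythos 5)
Your proposal is correct and follows essentially the same route as the paper: detect a gap in $A$ or $B$ (in which case the instance is a no-instance, since any two tuples on opposite sides of the gap are at distance at least $2$), and otherwise translate both coordinates so their minima become $1$, which preserves all pairwise distances. You are in fact slightly more explicit than the paper in outputting a concrete continuous no-instance (the $K_{1,3}$ star) in the gap case and in dispatching $m \le 1$, but these are presentational refinements rather than a different argument.
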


\begin{proof}
    It is clear that the problem is in NP, as $\pi \in \PERMUTE{m}$ is a polynomially checkable certificate. 
    
    We reduce from $\TUPLE$.
    Let $L = (a_1,b_1),\dots, (a_m,b_m) \in \NN \times \NN$ be an instance of $\TUPLE$.
    We only need to deal with the case of $L$ being non-continuous, as otherwise we simply map $L$ to itself.
    Let $A=\{a_i \mid i \in I\}$ and $B=\{b_i \mid i \in I\}$.
    If $L$ is non-continuous, then either 
    (1) there exists a partition of $[m]$ into $I$ and $J$, and $\alpha \in \NN$ such that $a_i < \alpha < a_{j}$ for every $i \in I$ and $j \in J$,
    (2) there exists a partition of $[m]$ into $I$ and $J$, and $\alpha \in \NN$ such that $b_i < \alpha < b_{j}$ for every $i \in I$ and $j \in J$,
    (3) $A$ and $B$ are the discrete intervals $A = \{\min A, \dots, ,\max A\}$ and $B= \{\min B, \dots, \max B\}$.
    Furthermore, we can decide if we are in case 1, 2 or 3 in time $\cO(m\log m)$ by sorting $A$ and $B$.

    If (1) or (2) holds, then for every $i \in I$ and $j \in J$ we have that $|a_i-a_j|+|b_i-b_j|\geq 2,$ which implies that $L$ is a no-instance.
    If (3) holds, we map $L$ to $L':= (a'_1,b'_1),\dots, (a_m',b_m')$ by shifting the instance so that the minimum among the first and second coordinates is one; i.e.,
    \[(a'_1,b'_1),\dots, (a_m',b_m') = (1+a_1-\min A,1+b_1-\min B),\dots, (1+a_m- \min A,1+b_m - \min B).\]
    Note that the encoding size of $L'$ is at most the encoding size of $L$ and that the mapping can be computed in polynomial time.
    Furthermore, for every $i,j \in [m]$ we have that
    \begin{align*} 
        |a'_i-a'_j|+|b'_i-b'_j| & = |1+a_i- \min A-(1+a_j- \min A)| + |1+b_i- \min B-(1+b_j- \min B)| \\
        & = |a_i-a_j|+|b_i-b_j|,
    \end{align*}
    so $L$ is a yes-instance if and only if $L'$ is a yes-instance. 
    This concludes the proof.
    \qed
\end{proof}

\subsection{Application: Swap Gray Codes for Permutations}
\label{sec:first_swap}

Here we show that $\TUPLEC$ can be used as a source problem for establishing the hardness of other Gray coding problems.
Consider the following problem.
\begin{problem}
    \problemtitle{\PERMSWAP}
    \probleminput{A list of $m$ permutations of length $n$, $\tau_1,\dots, \tau_m \in \PERMUTE{n}$.}
    \problemquestion{Is there a permutation $\pi \in \PERMUTE{m}$ such that $\tau_{\pi(i)}$ and $\tau_{\pi(i+1)}$ differ in an adjacent transposition for every $i \in [m-1]$?}
\end{problem}

\begin{theorem}
    \label{thm:permSwap}
    $\PERMSWAP$ is NP-complete.
\end{theorem}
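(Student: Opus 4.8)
The plan is to prove membership in NP and then reduce from $\TUPLEC$. Membership is immediate: a candidate order $\pi \in \PERMUTE{m}$ is a polynomial-size certificate, and verifying that each consecutive pair $\tau_{\pi(i)},\tau_{\pi(i+1)}$ differs by a single adjacent transposition takes polynomial time. For hardness I would reduce from $\TUPLEC$ rather than $\TUPLE$, and this is precisely where continuity is needed. Given a continuous instance $L=(a_1,b_1),\dots,(a_m,b_m)$ with $M=\max_i a_i$ and $N=\max_i b_i$, continuity forces $\{a_i\}=[M]$ and $\{b_i\}=[N]$, so $M,N\le m$ and the permutations I build will have length polynomial in $m$. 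Without continuity the coordinate ranges could be exponential in the input size, since the integers are given in binary, and the construction below would blow up.

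The gadget encodes each grid point $(a,b)$ as a permutation $\tau_{(a,b)}$ of length $n=M+N+1$ built from two disjoint \emph{tracks} separated by a fixed wall: positions $1,\dots,M$ form the $a$-track, position $M+1$ holds a single wall symbol that never moves, and positions $M+2,\dots,M+N+1$ form the $b$-track. I place a distinguished token $A$ at position $a$ of the $a$-track and a distinguished token $B$ at position $b$ of the $b$-track, and fill all remaining positions with distinct fillers listed in one fixed relative order that is identical in every codeword. Consequently all the $\tau_{(a,b)}$ are permutations of the same symbol set, any two of them agree except at the positions of the two tokens, and token $A$ (resp.\ $B$) never leaves its track. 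The map is computable in polynomial time, and since equal tuples map to equal permutations it is well defined even when $L$ has repeated tuples.

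The heart of the argument is an adjacency lemma: for all indices $i,j$, the tuples $(a_i,b_i)$ and $(a_j,b_j)$ differ by $\pm 1$ in exactly one coordinate if and only if $\tau_i$ and $\tau_j$ differ by a single adjacent transposition. This has two aspects. The first is a direct comparison of two codewords: if both coordinates differ then the permutations already disagree inside both tracks, so changed positions lie on both sides of the wall and no single swap can account for them; and moving one token by $k\ge 1$ steps along its track alters exactly $k+1$ consecutive entries, so two codewords differ by one swap precisely when a single coordinate changes by exactly one.

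The second aspect, which I expect to be the main obstacle, is ruling out \emph{spurious} single swaps that might connect two codewords in an unintended way. A swap of two fillers would reverse their fixed relative order, a swap carrying a token past the wall or into the other track would place a token outside its range, and a token–token swap cannot occur because the wall keeps the two tokens at distance at least two; in each case the outcome is not a codeword, so it cannot join two elements of the image. Granting the lemma, a single permutation $\pi\in\PERMUTE{m}$ satisfies the consecutive-swap condition for $\tau_1,\dots,\tau_m$ exactly when it satisfies the consecutive $\pm 1$ condition for $L$. Hence the constructed list is a yes-instance of $\PERMSWAP$ if and only if $L$ is a yes-instance of $\TUPLEC$, which completes the reduction; phrasing the conclusion index-wise keeps it correct even when $L$ contains repeated tuples, which simply map to identical permutations.
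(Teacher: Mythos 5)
Your proposal is correct and follows essentially the same route as the paper: reduce from $\TUPLEC$ (using continuity to keep the permutation length polynomial), encode each tuple as a permutation in which the positions of two distinguished movable symbols record the two coordinates against a fixed background, and prove that two codewords differ by a single adjacent transposition if and only if the tuples differ by $\pm 1$ in exactly one coordinate. The only differences are cosmetic: the paper places the two large symbols $a+b+1$ and $a+b+2$ directly into an identity background of length $a+b$ without an explicit wall symbol, whereas you insert a wall to separate the two tracks, which slightly simplifies ruling out the token--token swap.
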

\begin{proof}
    It is clear that the problem is in NP. 
    To see hardness, we reduce from $\TUPLEC$.

    Let $L=(a_1,b_1),\dots,(a_m,b_m) \in \NN\times \NN$ be a list of continuous 2-tuples. 
    Let $a = \max\limits_{i \in [m]} a_i$ and $b=\max\limits_{i \in [m]} b_i$.
    Given a 2-tuple $(x,y) \in [a] \times [b]$, we define a permutation $\tau := \phi(x,y) \in \PERMUTE{a+b+2}$ as the unique permutation of length $a+b+2$ such that $\tau^{-1}(a+b+1) = x$,  $\tau^{-1}(a+b+2) = a+y$, and after the removing symbols $(a+b+1)$ and $(a+b+2)$ from $\tau$, we get the identity permutation $1\cdots (a+b)$.
  
    We map the instance, $L$ to $L' = \phi(a_1,b_1),\dots, (a_m,b_m)$.
    Note that every permutation has encoding size of $(a+b+2)\log (a+b+2) \leq (2m+2)\log (2m+2)$ and that $\phi$ can be implemented in polynomial time.

    Note that the permutations produced can only differ on swaps involving either the symbol $(a+b+1)$ or the symbol $(a+b+2)$. 
    Furthermore, for two permutations produced by $\phi$, say $\tau,\rho \in \phi([a]\times [b])$, they differ if and only if the positions of symbol $(a+b+1)$ are adjacent or the positions of the symbol $(a+b+2)$ are adjacent, but not both; i.e.,
    \begin{equation}
        \label{eq:adjacentperm}
        |\tau^{-1}(a+b+1)-\rho^{-1}(a+b+1)|+|\tau^{-1}(a+b+2)-\rho^{-1}(a+b+2)|=1.
    \end{equation}
    Finally, if $\tau = \phi(x,y)$ and $\rho =\phi(z,w)$, the left side of \eqref{eq:adjacentperm} is $|x-z|+|y-w|$.
    Consequently, $L$ is a yes-instance if and only if $L'$ is a yes-instance, and the theorem follows.
    \qed
\end{proof}

We will see an alternative proof of Theorem~\ref{thm:permSwap} in later subsections.

\begin{figure}
    \centering
    \begin{subfigure}{0.48\textwidth}
       \centering      
       \includegraphics{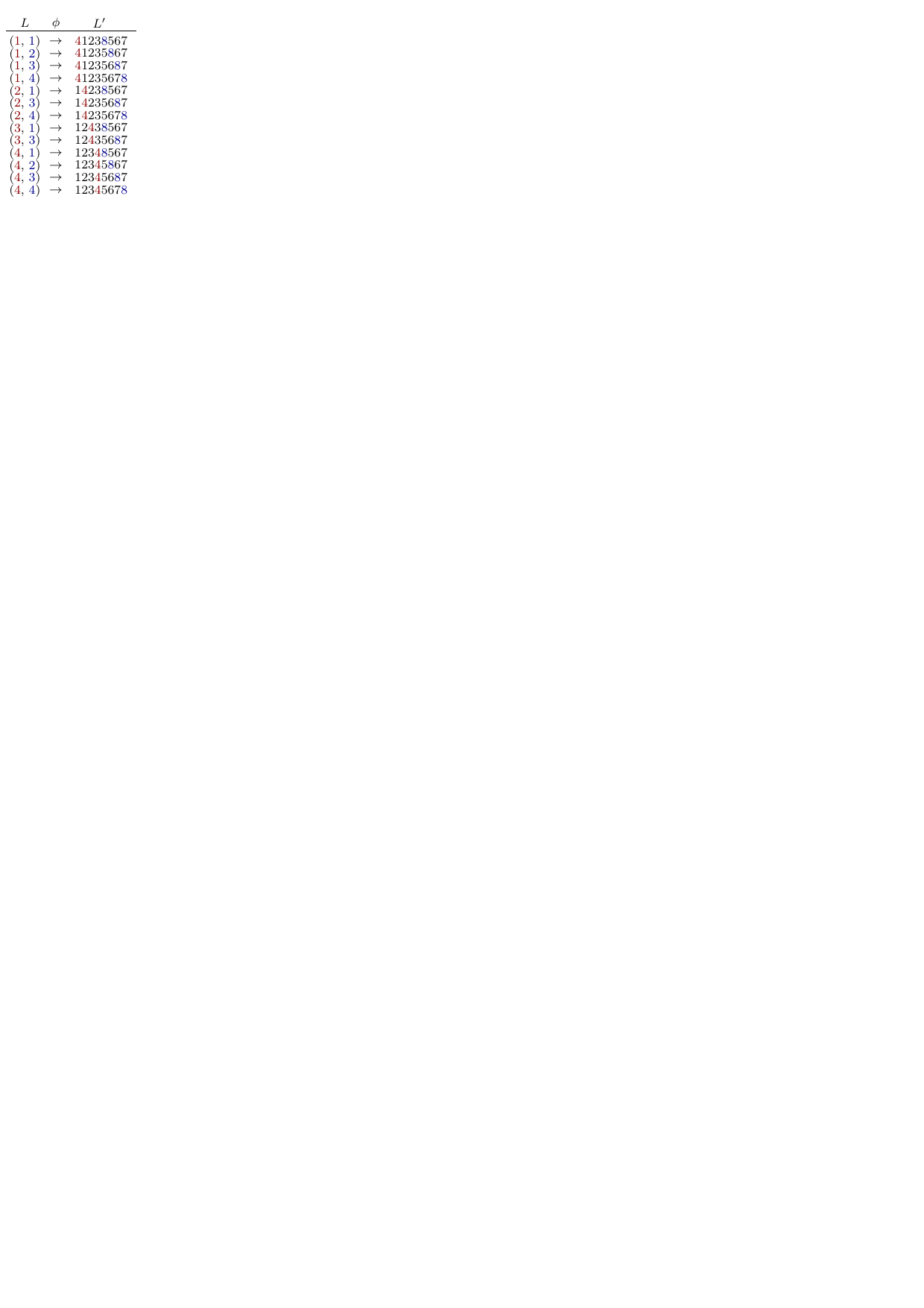}
        \caption{The reduction $\phi$.}
        \label{fig:permswap1}
    \end{subfigure}
    \hfill
    \begin{subfigure}{0.48\textwidth}
        \centering
        \includegraphics{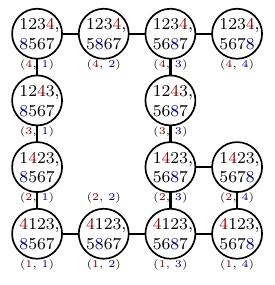}
        \caption{The corresponding grid graph.}
        \label{fig:permswap2}
    \end{subfigure}
    \caption{The reduction used in Theorem~\ref{thm:permSwap}, where $L$ is an instance of $\TUPLEC$ and $L'$ has the corresponding permutations in (a), with the resulting grid graph in (b).}
    \label{fig:enter-label}
\end{figure}
\vspace{-5ex}
\section{A Second Source for NP-Completeness}
\label{sec:second}

While $\TUPLEC$ is a useful source problem, we find it convenient to introduce another NP-complete Gray coding problem for subsequent reductions.
Recall that the binary reflected Gray code lists all $2^n$ bitstrings so that consecutive strings differ in one bit. %, this is arguably the most famous Gray code.
Thus, it is natural to ask which bitstrings have bitflip Gray codes (i.e., {\bfseries Q1} from Section~\ref{sec:intro}).
We'll show that this $\BITSTRING$ problem is hard by a reduction from $\TUPLEC$.

% More formally, we consider the following decision problem.

\begin{problem}
    \problemtitle{\BITSTRING}
    \probleminput{A list of $m$ bitstrings of length $n$, $x_1,\dots, x_m \in \{0,1\}^n$.}
    \problemquestion{Is there a permutation $\pi \in \PERMUTE{m}$ such that $x_{\pi(i)}$ and $x_{\pi(i+1)}$ differ in a bitflip for every $i \in [m-1]$?}
\end{problem}

\begin{theorem}
\label{thm:bitstringsGC}
    $\BITSTRING$ is NP-complete. 
\end{theorem}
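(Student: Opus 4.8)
The plan is to reduce from $\TUPLEC$; membership in NP is immediate, since a permutation $\pi \in \PERMUTE{m}$ witnessing the ordering is a polynomially checkable certificate. The heart of the reduction is a \emph{thermometer} (unary) encoding that converts the grid/Manhattan metric into the Hamming metric, so that ``$\pm 1$ in one coordinate'' becomes ``one bitflip''.

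Concretely, given a continuous instance $L = (a_1,b_1),\dots,(a_m,b_m)$ of $\TUPLEC$, I would set $a = \max_{i} a_i$ and $b = \max_{i} b_i$, and map each tuple $(x,y) \in [a]\times[b]$ to the bitstring
\[
    \psi(x,y) = \underbrace{1^{x}0^{a-x}}_{\text{coordinate }x}\,\underbrace{1^{y}0^{b-y}}_{\text{coordinate }y} \in \{0,1\}^{a+b},
\]
sending $L$ to the list $L' = \psi(a_1,b_1),\dots,\psi(a_m,b_m)$ of bitstrings of length $n = a+b$. The key property is that the thermometer code is distance-preserving on each coordinate: the Hamming distance between $1^{x}0^{a-x}$ and $1^{z}0^{a-z}$ equals $|x-z|$. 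Because the two coordinates occupy disjoint blocks of positions (the first $a$ bits and the last $b$ bits), the Hamming distance between $\psi(x,y)$ and $\psi(z,w)$ equals $|x-z| + |y-w|$, i.e., exactly the Manhattan distance of the tuples. In particular, two strings of $L'$ differ in a single bit if and only if the corresponding tuples differ by $\pm 1$ in exactly one coordinate, so $L$ admits a $\pm 1$ Gray code if and only if $L'$ admits a bitflip Gray code, and the two instances are yes-instances simultaneously. Note that repeated tuples simply map to repeated bitstrings, which the distance-preservation handles automatically, so no special care is needed there.

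The one step demanding attention---and the reason I would reduce from the \emph{continuous} problem rather than from $\TUPLE$ directly---is bounding the output size. If the coordinate values were arbitrary naturals, the length $a+b$ of the thermometer encoding could be exponential in the encoding size of the input, ruining the reduction. Continuity rescues this: it guarantees $\{a_i \mid i\in[m]\} = [a]$ and $\{b_i \mid i\in[m]\} = [b]$, so every value of $[a]$ (resp.\ $[b]$) is attained by one of the $m$ tuples, whence $a \le m$ and $b \le m$. Thus $n = a+b \le 2m$, the list $L'$ has polynomial total size, and $\psi$ is clearly computable in polynomial time. I expect this size bound to be the only real obstacle, and it is dispatched cleanly by the continuity hypothesis; the distance-preservation identity above is then a short, routine verification.
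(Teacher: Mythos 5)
Your proof is correct and follows essentially the same route as the paper: a reduction from $\TUPLEC$ using a unary (thermometer) encoding of each coordinate into a disjoint block, so that Hamming distance equals Manhattan distance, with continuity supplying the bound $a+b\le 2m$ that makes the map polynomial. Your encoding $1^{x}0^{a-x}1^{y}0^{b-y}$ is just the bitwise complement of the paper's $0^{a_i}1^{a-a_i}0^{b_i}1^{b-b_i}$, which changes nothing.
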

\begin{proof}
    It is clear that $\BITSTRING$ is in NP, as the permutation $\pi \in \PERMUTE{m}$ is a polynomially checkable certificate. 
    Thus, it only remains to prove NP-hardness.

    We reduce from $\TUPLEC$. 
    Let $L = (a_1,b_1),\dots, (a_m,b_m) \in \NN \times \NN$ be a list of continuous 2-tuples.
    Let $a = \max\limits_{i \in [k]}$ and $b = \max\limits_{i \in [k]} b_i$. 
    For each tuple $(a_i,b_i)$ we define the bitstring $x_i \in \{0,1\}^{a+b}$ as $x_i := 0^{a_i}1^{a-a_i}0^{b_i}1^{b-b_i}$.
    Since $L$ is continuous, we have $a + b \leq 2m$, and consequently, the mapping of the $\TUPLEC$ instance $L$ to the $\BITSTRING$ instance $L':= x_1,\dots, x_m$ runs in time polynomial in the encoding size of~$L$.

    We now show that $I$ is a yes-instance for $\TUPLEC$ if and only if $I'$ is a yes-instance for $\BITSTRING$.
    Note that for $i,j \in [k]$ the bitstrings $x_i = 0^{a_i}1^{a-a_i}0^{b_i}1^{b-b_i}$ and $x_j = 0^{a_j}1^{a-a_j}0^{b_j}1^{b-b_j}$ differ in a single bitflip if and only if $|a_i-a_j|+|b_i-b_j|=1$. 
    Hence, for every permutation $\pi \in \PERMUTE{m}$ and every $i\in [m-1]$ it holds that $x_{\pi(i)}$ and $x_{\pi(i+1)}$ differ in a single bitflip if and only if $|a_i-a_j|+|b_i-b_j|=1$.
    This concludes the proof. 
    \qed
\end{proof}

Note that the mapping used in Theorems~\ref{thm:bitstringsGC} and~\ref{thm:permSwap} is not polynomial time without continuity.
This can be easily seen, in the non-continuous input 2-tuple $(n,n)$ which needs $\cO(\log n)$ bits to be represented, but it is mapped to the bitstring $0^{2n}$ that needs $\cO(n)$ bits.

A similar reduction idea has been also used to show that solving the Rubik's cube optimally is hard~\cite{demaine2018solving}.

\section{Polynomial-time Gray Code Reductions via Hypercubes}
\label{sec:reductions}

There is a natural graph associated with every Gray code:
represent each object with a vertex, and join two vertices by an edge if their objects differ by a flip.
These graphs are known as ~\emph{flip graphs}, and a Gray code provides a Hamilton path.
For example, the flip graph for bitstrings of length $n$ and bitflips is the $n$-dimensional hypercube or $n$-cube.

% involving various combinatorial objects and flips operations.
Let $Y$ be a type of combinatorial object, and $Y_m$ be those objects of size $m$.
In addition, let $F: Y \rightarrow Y$ be a type of flip operation acting on objects of type $Y$ without changing their size.
In particular, let $F_m: Y_m \rightarrow Y_m$ be the flip operation applied to the objects of size $m$.
To prove that the Gray coding problem on $Y$ and $F$ is hard, we will use a new type of reduction defined below.

\begin{definition} \label{def:HypercubeGrayCodeReduction}
A \emph{polynomial-time Gray code reduction via hypercubes to $Y$ and $F$} is a poly-time computable function~$f : \BINARY{n} \to Y_m$ that maps bitstrings of length~$n$ to an object of type $Y$ and size $m$, such that two binary string $b \in \BINARY{n}$ and $b' \in \BINARY{n}$ differ in a single bit if and only if the corresponding objects $f(b) \in Y_m$ and $f(b') \in Y_m$ differ by a flip of type $F_m$.
\end{definition}

For brevity, we use the term \emph{Gray code reduction} for \emph{polynomial-time Gray code reduction via hypercubes} in the rest of the document. 
An immediate consequence of Definition \ref{def:HypercubeGrayCodeReduction} is the following remark.

\begin{remark} \label{rem:InducedSubgraph}
If there is a Gray code reduction from objects $Y$ and flips $F$, then the flip graph $(Y_m, F_m)$ contains an induced subgraph that is isomorphic to hypercubes of dimension~$n$ inside the flip graph of dimension~$m$ associated with~$Y$. Moreover, we can efficiently find induced subgraphs of the flip graph that are isomorphic to any induced subgraph of the hypercube.    
\end{remark}

%This is due to the fact that the function $f$ can be computed in polynomial-time.
We now present our main theorem for proving that various Gray coding problems are NP-hard.

\begin{theorem} \label{thm:GrayCodeReduction}
    If there is a Gray code reduction $f: \BINARY{n} \to Y_m$ for flips of type $F_m$, then the following Gray coding problem is NP-hard.
    \begin{problem}
    \problemtitle{Gray Coding Problem for Objects $Y_m$ and Flips $F_m$}
    \probleminput{A list $L$ of elemenets in $Y_m$.}
    \problemquestion{Is there a $F_m$ flip Gray code for $L$? 
    }
\end{problem}
\end{theorem}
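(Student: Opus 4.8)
The plan is to reduce from $\BITSTRING$, which is NP-complete by Theorem~\ref{thm:bitstringsGC}. Let $f : \BINARY{n} \to Y_m$ be the given Gray code reduction for flips of type $F_m$ (understood as a uniformly poly-time computable family, one map for each bitstring length $n$, with object size $m = m(n)$ determined by $n$). Given an instance $x_1, \dots, x_k \in \BINARY{n}$ of $\BITSTRING$, I would apply $f$ pointwise and output the list $L' := f(x_1), \dots, f(x_k)$ of elements of $Y_m$ as an instance of the Gray coding problem for $Y_m$ and $F_m$. Intuitively, $\BITSTRING$ asks for a Hamilton path in the subgraph of the $n$-cube induced by the input strings, and by Remark~\ref{rem:InducedSubgraph} the map $f$ carries this structure into the flip graph $(Y_m, F_m)$, so the two Hamilton-path questions should coincide.

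First I would check that the map $L \mapsto L'$ runs in polynomial time. Since $f$ is poly-time computable, each $f(x_i)$ is produced in time polynomial in $n$; in particular the encoding size of each output object, and hence $m$, is polynomially bounded in $n$. Producing all $k$ objects therefore costs time polynomial in the size of the input list.

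The heart of the argument is correctness, and here the defining biconditional of Definition~\ref{def:HypercubeGrayCodeReduction} does all the work. For any permutation $\pi \in \PERMUTE{k}$ and any index $i \in [k-1]$, the strings $x_{\pi(i)}$ and $x_{\pi(i+1)}$ differ in a single bit if and only if $f(x_{\pi(i)})$ and $f(x_{\pi(i+1)})$ differ by a flip of type $F_m$. Taking the conjunction of this equivalence over all $i \in [k-1]$ shows that a fixed $\pi$ is a valid bitflip Gray code for $x_1, \dots, x_k$ exactly when it is a valid $F_m$-flip Gray code for $L'$. Thus the set of witnessing permutations is identical for the two instances, so $x_1, \dots, x_k$ is a yes-instance of $\BITSTRING$ if and only if $L'$ is a yes-instance of the target problem. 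As only hardness is claimed, membership in NP need not be addressed.

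The step I expect to require the most care is making sure no unstated hypothesis on $f$ (such as injectivity) is secretly needed, since a naive appeal to the induced-subgraph picture tacitly assumes $f$ is injective on the input. It is a priori possible that $f(x_i) = f(x_j)$ for distinct $x_i \neq x_j$, creating a repeated object in $L'$; but the biconditional above is stated at the level of the original indices, so such repetitions never break the equivalence. Concretely, equal objects cannot differ by a flip, so whenever $f(x_i) = f(x_j)$ the biconditional forces $x_i$ and $x_j$ to differ in at least two bits, meaning neither list allows that pair to be placed consecutively. This exact agreement on admissible adjacencies is what makes the one-line map $L \mapsto f(L)$ a correct reduction.
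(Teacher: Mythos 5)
Your proposal is correct and follows essentially the same route as the paper: reduce from $\BITSTRING$, apply $f$ pointwise to the input list, and invoke the biconditional in Definition~\ref{def:HypercubeGrayCodeReduction} to conclude that the two instances have exactly the same set of witnessing permutations. Your extra remark about possible non-injectivity of $f$ is a careful touch the paper's proof does not spell out, but it is a refinement of the same argument rather than a different approach.
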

\begin{proof}
Consider a list of binary strings $B \subseteq \BINARY{n}$, and the associated bitflip Gray code problem $\BITSTRING(L)$.
If there is a Gray code reduction $f: \BINARY{n} \to Y_m$ for flips of type $F_m$, then consider the following list 
\begin{equation} \label{eq:GrayCodeReduction_S}
L := \{f(b) \mid b \in B\}.
\end{equation}
By definition of a Gray code reduction, we know that $b \in \BINARY{n}$ and $b' \in \BINARY{n}$ differ in a single bit if and only if $f(b) \in Y_m$ and $f(b') \in Y_m$ differ by a flip of type $F_m$.
Therefore, $\BITSTRING(L)$ is a yes-instance, if and only if, $GrayCoding(L, F_m)$ is a yes-instance.
Also, note that $L$ can be created in polynomial-time with respect to the size of the original input $B$. %due to the fact that it runs in polynomial-time for each member of $B$ with respect to $n$. 
Since $\BITSTRING$ is NP-hard, we conclude that $GrayCoding(L, F_m)$ is NP-hard.
\qed
\end{proof}

\subsection{Example: (Non-Crossing) Set Partitions by Refinement}
\label{sec:reductions_sp}

To visualize Theorem \ref{thm:GrayCodeReduction}, let's consider a Gray coding problem that has not been posed in the literature.
Let $\NC{n}$ be the set of \emph{non-crossing set partitions of $[n]$}, which are set partitions in which no pair of subsets cross (i.e., if $a$ and $b$ are in one subset and $x$ and $y$ are in another, then $a < x < b < y$ is not true).
Two different set partitions differ by a \emph{refinement} if one can be obtained from the other by splitting a single subset or merging two subsets.
The corresponding flip graph is $\NCGraph{n}$.

\begin{problem}
    \problemtitle{{$\NCREF$}}
    \probleminput{A list $L$ of non-crossing partitions from $\NC{n}$.}
    \problemquestion{Is there a refinement Gray code for $L$? 
    }
\end{problem}

A Gray code reduction from $\BITSTRING$ to $\NCREF$ problem is shown in Figure \ref{fig:GrayCodeReduction}.
In particular, we map binary strings of length $n$ to non-crossing set partitions of $[n+1]$ as follows:
if $b_i = 0$, then $i+1$ is a singleton subset, and otherwise $i+1$ is in the same subset as $1$.
Thus, toggling $b_i$ is equivalent to moving element $i+1$ in or out of $1$'s subset, and this move is a refinement.
As a result, the mapping provides an induced subgraph of $\NCGraph{n+1}$ that is isomorphic to $\CubeGraph{n}$ (as highlighted).
Moreover, $f$ can be computed in polynomial-time for each binary string, so we can efficiently find an induced subgraph of $\NCGraph{n+1}$ that is isomorphic to any induced subgraph of $\CubeGraph{n}$.
Hence, we can conclude that $\NCREF$ is NP-hard.
Careful readers may have noticed that hardness also follows for $\SPREF$ (i.e., $\NCREF$ but on set partitions) since every non-crossing set partition is also a set partition.
% Similarly, the problems $\NCMOVE$ and $\SPMOVE$ (i.e., $\NCREF$ and $\SPREF$ but using element moves instead of refinements) are also NP-hard since the refinements .
Both problems are also clearly in NP, so we have the following theorem.

\begin{theorem} \label{thm:SetPartitionRefinement}
$\NCREF$ and $\SPREF$ are NP-complete.
\end{theorem}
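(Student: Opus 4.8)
The plan is to invoke Theorem~\ref{thm:GrayCodeReduction} directly. To conclude that $\NCREF$ is NP-complete, I must supply a Gray code reduction $f: \BINARY{n} \to \NC{n+1}$ in the sense of Definition~\ref{def:HypercubeGrayCodeReduction}, together with the easy observation that both problems lie in NP. Membership in NP is immediate, since a witnessing permutation $\pi \in \PERMUTE{m}$ is a polynomially checkable certificate (one verifies each consecutive pair differs by a single refinement step in polynomial time).

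For the hardness part, I would define $f$ exactly as the text suggests: given $b = b_1 \cdots b_n \in \BINARY{n}$, let $f(b)$ be the set partition of $[n+1]$ in which element $i+1$ belongs to the block containing $1$ whenever $b_i = 1$, and is a singleton $\{i+1\}$ whenever $b_i = 0$. First I would check that $f(b)$ is well-defined and always \emph{non-crossing}: the only non-singleton block is the one containing $1$, and a partition with a single non-singleton block can never have two crossing blocks, so $f(b) \in \NC{n+1}$. Next I would verify poly-time computability, which is clear since $f$ just reads off $n$ bits. The heart of the argument is the biconditional in Definition~\ref{def:HypercubeGrayCodeReduction}: I would show that $b$ and $b'$ differ in exactly one bit, say coordinate $i$, if and only if $f(b)$ and $f(b')$ differ by a single refinement. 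If only $b_i$ flips, then $f(b)$ and $f(b')$ agree on the membership of every element except $i+1$, and one is obtained from the other by moving $i+1$ into or out of $1$'s block --- precisely a merge of the singleton $\{i+1\}$ into that block, or the reverse split, which is one refinement step. Conversely, a single refinement between two partitions in the image of $f$ can only move one element $i+1$ relative to $1$'s block (since every other block is forced to be a singleton), which corresponds to flipping exactly one bit $b_i$.

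With the reduction in hand, Theorem~\ref{thm:GrayCodeReduction} gives NP-hardness of $\NCREF$ immediately, and combined with NP membership yields NP-completeness. For $\SPREF$, I would observe that every non-crossing set partition is in particular a set partition, so the very same map $f$ is also a valid Gray code reduction into (unrestricted) set partitions of $[n+1]$: the forward and reverse implications of the biconditional used only the structure of $f(b)$ (one distinguished block plus singletons) and the definition of refinement, neither of which references the non-crossing constraint on the \emph{target} class. Hence $\SPREF$ is NP-hard by the same theorem, and NP-complete once membership is noted.

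The main obstacle I anticipate is not any hard calculation but rather the care needed in the converse direction of the biconditional: one must argue that a refinement step between two \emph{images} $f(b), f(b')$ cannot be some exotic merge or split that fails to correspond to a single bit flip. The key structural fact making this clean is that every partition in the image of $f$ has at most one non-singleton block, so the only available refinement moves are attaching a singleton $\{i+1\}$ to $1$'s block or detaching such an element --- each a one-bit change. I would state this structural restriction explicitly, since it is what rules out, for instance, merging two singletons with each other (the result would have two non-singleton-incompatible blocks and thus lie outside the image of $f$, so it is irrelevant to comparing two images).
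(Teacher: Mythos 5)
Your proposal is correct and follows essentially the same route as the paper: the identical map $f$ sending $b_i=1$ to ``$i+1$ joins $1$'s block'' and $b_i=0$ to ``$\{i+1\}$ is a singleton,'' combined with Theorem~\ref{thm:GrayCodeReduction} and the observation that the same reduction covers $\SPREF$. Your explicit treatment of the converse direction (that any refinement between two images must attach or detach a single singleton to $1$'s block) is a detail the paper leaves implicit, but it is the same underlying argument.
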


\begin{figure}[h]
    \centering
    \begin{subfigure}[b]{0.32\textwidth}
        \centering       \includegraphics[scale=0.9]{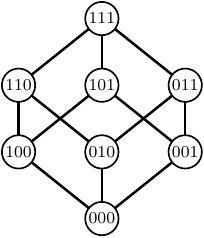}
        \caption{The $3$-cube $\CubeGraph{3}$.}
        \label{fig:GrayCodes_binary}
    \end{subfigure}
    \hfill
    \begin{subfigure}[b]{0.2\textwidth}
        \centering
        \begin{tabular}{@{}c@{\;\;}c@{\;\;}c@{}}
        $\BINARY{3}$ & $f$ & $S \subseteq \NC{4}$ \\ \hline
            000 & $\rightarrow$ & $1|234$ \\
            001 & $\rightarrow$ & $14|23$ \\
            010 & $\rightarrow$ & $13|24$ \\
            011 & $\rightarrow$ & $134|2$ \\
            100 & $\rightarrow$ & $12|34$ \\
            101 & $\rightarrow$ & $124|3$ \\
            110 & $\rightarrow$ & $123|4$ \\
            111 & $\rightarrow$ & $1234$
        \end{tabular} 
        \caption{Mapping.}
    \end{subfigure}
    \hfill
    \begin{subfigure}[b]{0.46\textwidth}
        \centering
        \includegraphics[scale=0.9]{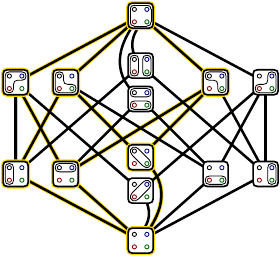}
        \caption{The flip graph $\NCGraph{4}$ with $\NCGraph{4}[S]$ highlighted.}
        \label{fig:GrayCodes_noncrossing}
    \end{subfigure}
    \caption{A Gray code reduction from $\BITSTRING$ to $\NCREF$.
    The one-to-one function $f : \BINARY{n} \to \NC{n+1}$ maps binary strings to non-crossing set partitions in such a way that $b \in \BINARY{n}$ and $b' \in \BINARY{n}$ differ by a bit-flip, if and only if, $f(b) \in \NC{n+1}$ and $f(b') \in \NC{n+1}$ differ by refinement.
    %Therefore, the answer to $\BITSTRING(B)$ for $B \subseteq \BINARY{n}$ is the same as the answer to $\NCREF(S)$ for $S = \{f(b) \mid b \in B\}$. 
    %Moreover, $f(b)$ can be computed in polynomial-time for each $b$ (see Section \ref{sec:reductions}).
    In (c) we use \textcolor{black}{$\mathbf{\circ}$}, \textcolor{red}{$\mathbf{\circ}$}, \textcolor{green}{$\mathbf{\circ}$}, \textcolor{blue}{$\mathbf{\circ}$} for $1,2,3,4$ and the non-singleton subsets are surrounded.
    }
 
    \label{fig:GrayCodeReduction}
\end{figure}
In subsequent sections, we visualize our Gray code reductions by illustrating the induced subgraph isomorphic to a hypercube.
In other words, we illustrate the list $L = \{f(b) \mid b \in \BINARY{n}\}$ (i.e., \eqref{eq:GrayCodeReduction_S} for the full set of binary strings) and argue that the mapping $f$ can be computed in polynomial-time.

For the sake of comparison, the approach taken in Sections~\ref{sec:first_swap} and~\ref{sec:second} can be defined as a \emph{polynomial-time Gray code reduction via grid graph}, where the source problem was~$\TUPLEC$ rather than~$\BITSTRING$.

\section{Combinations}
\label{sec:combos}

A \emph{combination} is a subset of $[n]$ of a fixed size $k$, where $0 \leq k \leq n$.
We denote the set of all combinations of $[n]$ of fixed size $k$ by ~$\COMBOS{n}{k}$.
Commonly, it is represented as a binary string $b_1 b_2 \ldots b_n$ where $b_i := 1$ if $i$ is in the set, otherwise $b_i := 0$. 
The first transposition Gray code for combinations appeared in \cite{tang1973distance}. 
Subsequently, many Gray code results were published in the literature notably~\cite{eades1984algorithm}, \cite{buck1984gray}, and ~\cite{ruskey2009coolest}.

\begin{problem}
    \problemtitle{{$\COMBSWAP / \COMBTRANS / \COMBCOMP/ 
    \COMBREV$}}
    \probleminput{A list $L$ of combinations from $\COMBOS{n}{k}$.}

    \problemquestion{Is there a swap / transposition / substring complement/ substring reversal Gray code for $L$? 
    }
\end{problem}

\begin{theorem} \label{thm:combos}
$\COMBSWAP, \COMBTRANS, \COMBCOMP,  \COMBREV$ are NP-hard.
\end{theorem}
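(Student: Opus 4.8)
The plan is to establish each of the four hardness results via the Gray code reduction framework of Theorem~\ref{thm:GrayCodeReduction}. For each flip operation $F_m \in \{\text{swap}, \text{transposition}, \text{substring complement}, \text{substring reversal}\}$, it suffices to exhibit a poly-time computable injection $f : \BINARY{n} \to \COMBOS{N}{k}$ (for suitable $N,k$ depending on $n$) such that $b,b' \in \BINARY{n}$ differ in a single bit if and only if $f(b),f(b')$ differ by an $F_m$-flip. Since combinations are naturally encoded as binary strings of length $N$ with exactly $k$ ones, the task reduces to embedding the $n$-cube $\CubeGraph{n}$ as an induced subgraph of the relevant flip graph on fixed-weight binary strings, in a way that depends only locally on each input bit.

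First I would handle $\COMBSWAP$ and $\COMBTRANS$ together, since a swap (adjacent transposition) and a transposition are closely related. The natural idea is a ``gadget per bit'' construction: reserve a disjoint block of positions for each input coordinate $b_i$, and arrange that toggling $b_i$ corresponds to moving a single $1$ within its block by one position (a swap) while all other blocks remain frozen. Concretely, for each $i$ one can use a pair of adjacent positions holding a single $1$, so that $b_i=0$ places the $1$ on the left and $b_i=1$ places it on the right; a bitflip of $b_i$ is then exactly a swap, and flips in distinct blocks are independent because the blocks are separated by fixed $1$s or $0$s that prevent any spurious adjacency. I would verify the ``only if'' direction carefully: two combinations in the image differ by a swap precisely when their underlying tuples differ in one block by one step, which matches the single-bit condition. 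For $\COMBTRANS$ the same image works, but one must argue that no \emph{non-adjacent} transposition can connect two distinct image elements without also witnessing a single-bit difference; padding the blocks so that any weight-preserving move of Hamming-type distance two is forced to be local should achieve this.

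Next I would treat $\COMBCOMP$ (substring complement) and $\COMBREV$ (substring reversal). These operations change or reorder a contiguous substring, so the gadgets must be designed so that the only legal substring complement/reversal between two image elements is the one encoding a single bitflip. For substring complement, toggling a bit should correspond to complementing a short fixed-length window located in the $i$-th block; the key is to ensure the window complement preserves the weight $k$ (so the window must contain equal numbers of $0$s and $1$s, e.g.\ complementing $01 \leftrightarrow 10$) and that no other window complement maps one image point to another. For substring reversal, toggling $b_i$ should correspond to reversing a minimal window such as $01 \leftrightarrow 10$ inside block $i$; again separators guarantee locality.

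The main obstacle I expect is the ``only if'' direction uniformly across all four operations: I must guarantee that the induced subgraph on the image is \emph{exactly} the $n$-cube, with no extra edges. For swaps this is easy, but for transpositions, substring complements, and substring reversals the flip operations are more permissive, so two image strings could in principle be related by a ``large'' flip that does not correspond to any single bitflip, or by a local flip that straddles a block boundary. The remedy is to choose the separators and gadget encodings so that every admissible $F_m$-flip between two image points is forced to be confined to one block and to realize precisely a bitflip there; this requires checking that the fixed ``frame'' bits between blocks cannot participate in any weight-preserving flip of the allowed type, and that within a block only the intended move is weight-preserving and type-legal. Once the image $L = \{f(b) : b \in \BINARY{n}\}$ is shown to induce a clean $n$-cube and $f$ is poly-time, Theorem~\ref{thm:GrayCodeReduction} and membership in NP (the permutation $\pi$ is a poly-checkable certificate) yield NP-completeness, hence in particular NP-hardness as claimed.
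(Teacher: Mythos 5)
Your plan is essentially the paper's proof: one two-position gadget $c_{2i-1}c_{2i}\in\{01,10\}$ per input bit, so that a bitflip is exactly a swap inside one gadget, with separators added for the more permissive operations; the paper uses exactly this, placing the gadgets contiguously for $\COMBSWAP$/$\COMBTRANS$ (no separators are needed there, since any effective transposition between two image strings is forced to be an intra-gadget swap), a single padding bit $1$ between gadgets for $\COMBCOMP$, and the two-bit padding $01$ for $\COMBREV$. The one place where your sketch leaves a detail that could actually go wrong is the reversal case: you say ``again separators guarantee locality,'' but the separator that works for substring complements (a single $1$) does \emph{not} work for substring reversals, because reversing a long window such as $01\,1\,01 \mapsto 10\,1\,10$ flips two gadgets at once while staying inside the image, creating a spurious edge. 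The fix, as in the paper, is to make the separator an asymmetric block like $01$, so that any reversal spanning a separator disturbs the frame and leaves the image; once that choice is made explicit, your argument goes through and matches the paper's.
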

\begin{proof}
We use a Gray code reduction; see Figure~\ref{fig:combos}.
For $\COMBSWAP$, we define the following list of combinations from $\COMBOS{2n}{n}$
\begin{equation} \label{eq:comboSwap}
    L := \{ \bits{c_1 c_2} \ldots \bits{c_{2n-1} c_{2n}} \; | \; 
    \bits{c_{2i-1} c_{2i}} \in \{01,10\} \text{ for } 1 \leq i \leq n \, \},
\end{equation}
where each bit $b_i$ is implemented by the pair $\bits{c_{2i-1} c_{2i}}$.
We map each binary string of length $n$ to a combination of length $2n$ with $n$ many 1s as
\begin{equation} \label{eq:comboSwapTranslate}
    f(b_1 b_2 \cdots b_n) := \darkBlue{c_1 c_2 \ldots c_{2n-1} c_{2n}}, \text{ where} \,\darkBlue{c_{2i-1} c_{2i}} =  01 \text{ if } b_i = 0 \text{ or } 10 \text{ if } b_i = 1.
\end{equation}

% where each pair $\bits{c_{2i-1} c_{2i}}$ implements the single bit $b_i$.
To understand this construction, note that elements of $L$ can only be modified using the intended swaps within a single pair (i.e., $\bits{c_{2i-1} c_{2i}}$) rather than between the pair~$\bits({c_{2i} c_{2i+1}})$.
Similar arguments for $\COMBTRANS$ show that the intended swaps are the only transpositions that can modify members of $L$.

\begin{figure}[h]
    \centering
    \begin{subfigure}{0.32\textwidth}
        \includegraphics[width=0.935\textwidth]{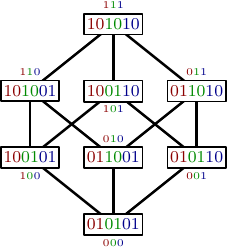}
        \caption{$\darkRed{c_1 c_2} \darkBlue{c_3 c_4} \darkGreen{c_5 c_6}$ from \eqref{eq:comboSwap} with swaps or transpositions.} % (i.e., adjacent-transpositions)
        \label{fig:combos_swap}
    \end{subfigure}
    \hfill
    \begin{subfigure}{0.32\textwidth}
        \includegraphics[width=\textwidth]{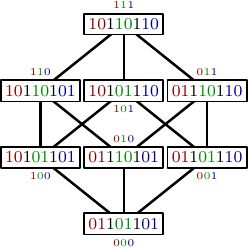}
        \caption{$\darkRed{c_1 c_2} \, \padding{1} \, \darkBlue{c_3 c_4} \, \padding{1} \, \darkGreen{c_5 c_6}$ from \eqref{eq:comboComplement} with substring complements.}
        \label{fig:combos_complement}
    \end{subfigure}
    \hfill
    \begin{subfigure}{0.32\textwidth}
        \includegraphics[width=\textwidth]{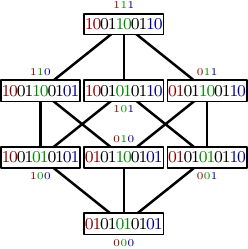}
        \caption{$\darkRed{c_1 c_2} \, \padding{01} \, \darkBlue{c_3 c_4} \, \padding{01} \, \darkGreen{c_5 c_6}$ from \eqref{eq:comboReverse} with substring reversals.}
        \label{fig:combos_reverse}
    \end{subfigure}
    \caption{
     Gray code reduction to prove the NP-hardness for combinations.
    }
    \label{fig:combos}
\end{figure}

The list in \eqref{eq:comboSwap} does not establish the result for $\COMBCOMP$.
This is because one substring complement can modify multiple pairs of bits e.g., $\overline{\bits{0 1} \, \bits{1 0}}$ gives $\bits{1 0} \, \bits{0 1}$.
To avoid this, we insert a padding bit~$\padding{1}$ between pairs and use these combinations from $\COMBOS{3n-1}{2n-1}$
\begin{equation} \label{eq:comboComplement}
    L := \{ \bits{c_1 c_2} \, \padding{1} \, \bits{c_3 c_4} \, \padding{1} \, \ldots \, \padding{1} \, \bits{c_{2n-1} c_{2n}} \; | \; \bits{c_{2i-1} c_{2i}} \in \{01,10\} \text{ for } 1 \leq i \leq n \, \}.
\end{equation} 
The $\padding{1}$ bits of padding prevent any substring complement of length $>2$ from modifying elements of $L$;
it is also clear that substring complements of length $1$ cannot modify $L$.
As a result, the only valid complements are internal to the pair ($\bits{c_{2i-1} c_{2i}}$).

Similarly, the list in \eqref{eq:comboComplement} does not establish the result for~$\COMBREV$.
This is because a single substring reversal can reverse multiple pairs of bits e.g., reversing $\bits{0 1} \, \padding{1} \, \bits{0 1}$ gives $\bits{1 0} \, \padding{1} \, \bits{1 0}$.
To avoid this, we use two bits of padding $\padding{01}$ and the following list of combinations from $\COMBOS{4n-1}{2n-1}$
\begin{equation} \label{eq:comboReverse}
    L := \{ \bits{c_1 c_2} \, \padding{01} \, \bits{c_3 c_4} \, \padding{01} \, \ldots \, \padding{01} \, \bits{c_{2n-1} c_{2n}} \; | \; \bits{c_{2i-1} c_{2i}} \in \{01,10\} \text{ for } 1 \leq i \leq n \, \}.
\end{equation} 
The~$\padding{01}$ padding ensures that the only substring reversals that can modify the elements of~$L$ are the intended swaps within the pair ($\bits{c_{2i-1} c_{2i}}$). 

Similar to equation~\ref{eq:comboSwapTranslate}, we can efficiently map a binary string ~$b_1 b_2\ldots b_n$ with combinations of length~$3n-1$ with~$2n-1$ many 1s and of length~$4n-1$ with~$3n-1$ many 1s, comprising of padding bits for \COMBCOMP and \COMBREV, respectively.
\qed
\end{proof}

\section{Problems on Permutations including Pattern-Avoidance}
\label{sec:permute}
For the set of permutations~$\PERMUTE{n}$, we consider an ordering where two consecutive permutations differ by \emph{swaps}. 
Gray codes then emerged involving \emph{transpositions} \cite{compton1993doubly}, \cite{shen2013hot}, as well as \emph{prefix-reversals} \cite{Ord67} and \emph{prefix shifts} \cite{corbett1992rotator} and ~\cite{sawada2019solving}.
Given a permutation $\pi = p_1 \cdots p_n$ with a substring $p_i \cdots p_j$ where $p_i > p_{i+1} \cdots  p_j$ , a \emph{right-jump} of the value $p_i$ by $j - i$ steps is a cyclic left rotation of this substring by one position to $p_{i+1} \cdots p_jp_i$. 
Analogously, we define a \emph{left-jump}. 
Jump Gray codes were given in ~\cite{hartung2022combinatorial}.
\begin{problem}
    \problemtitle{$\PERMSWAP / \PERMTRANS / \PERMREV/ 
    \PERMROT /
   \PERMJUMP$}
    \probleminput{A list $L$ of permutations from $\PERMUTE{n}$.}

    \problemquestion{Is there a swap / transposition / substring reversal/ substring rotation/ jump Gray code for $L$? 
    }
\end{problem}

\begin{theorem} \label{thm:permute}
$\PERMSWAP, \PERMTRANS, \PERMREV, \PERMROT, \PERMJUMP$ are NP-hard.
\end{theorem}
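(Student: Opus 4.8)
The plan is to give a single \emph{Gray code reduction} from $\BITSTRING$ and then invoke Theorem~\ref{thm:GrayCodeReduction}, handling all five flip operations with one map. Unlike the combinations of Section~\ref{sec:combos}, no padding should be needed: since permutations use \emph{distinct} values, any flip that straddles a gadget boundary displaces a value and leaves the image, so the same gadget can serve swaps, transpositions, reversals, rotations, and jumps simultaneously. Concretely, I would map $\BINARY{n}$ into $\PERMUTE{2n}$ by the block encoding
\[
f(b_1 \cdots b_n) := \beta_1 \beta_2 \cdots \beta_n, \qquad \beta_i = \begin{cases}(2i-1)\,(2i) & \text{if } b_i = 0,\\ (2i)\,(2i-1) & \text{if } b_i = 1,\end{cases}
\]
so that bit $i$ is recorded by the order of the two values $\{2i-1,2i\}$ occupying positions $2i-1,2i$. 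This is clearly injective and poly-time computable.

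The heart of the argument is an invariant characterising the image of $f$: a permutation $\sigma \in \PERMUTE{2n}$ lies in $f(\BINARY{n})$ if and only if every value $v$ sits in its \emph{home block}, i.e.\ in one of the two (adjacent) positions $\{2\lceil v/2\rceil - 1, \, 2\lceil v/2\rceil\}$. I would use this to verify the reduction's defining equivalence for each operation. For completeness (one-bit difference $\Rightarrow$ one flip): flipping bit $i$ is achieved by exchanging the two values of block $i$, and an exchange of two adjacent values is realised by a swap, a transposition, a length-$2$ reversal, a length-$2$ rotation, and (moving the larger value over the smaller) a single-step jump, so each of the five flip types realises the bit flip. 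For soundness (one flip $\Rightarrow$ one-bit difference): I argue that any flip sending one image to another must act inside a single block. A swap or transposition moves two values, which can both stay home only if they share a block, forcing $\{v,w\}=\{2i-1,2i\}$; a reversal or rotation of a window $[l,r]$ keeps every value home only if it maps each occupied position to another position of the same size-$2$ block, which (since same-block positions are adjacent) forces $r=l+1$ with $\{l,l+1\}$ a genuine block; and a jump moves one value over a contiguous run, which can stay within home blocks only if that run has length $1$ inside a single block. In every case the two images differ in exactly one bit.

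With the equivalence established for each of swaps, transpositions, substring reversals, substring rotations, and jumps, Theorem~\ref{thm:GrayCodeReduction}, together with the NP-hardness of $\BITSTRING$ from Theorem~\ref{thm:bitstringsGC}, immediately yields that $\PERMSWAP$, $\PERMTRANS$, $\PERMREV$, $\PERMROT$, and $\PERMJUMP$ are NP-hard; in particular this gives the alternative proof of Theorem~\ref{thm:permSwap} promised earlier. I expect the main obstacle to be the soundness direction for the variable-length operations (reversals, rotations, jumps), where one must rule out a long flip that permutes several blocks yet happens to land on another image. The home-block invariant, combined with the fact that size-$2$ blocks occupy adjacent positions, is exactly what closes this gap cleanly and uniformly.
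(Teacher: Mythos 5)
Your proposal is correct and is essentially the paper's proof: the same block encoding of bit $i$ by the order of the values $\{2i-1,2i\}$ in positions $2i-1,2i$, the same observation that distinctness of values makes one gadget work for all five operations without padding, and the same appeal to the hypercube reduction framework. The only difference is that you spell out the soundness direction (via the home-block invariant) that the paper merely asserts, which is a welcome elaboration rather than a departure.
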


\begin{proof}
We use a Gray code reduction; see Figure \ref{fig:permute}.
Let~$b_1b_2 \cdots b_n$ is a bitstring of length~$n$.
For \textsc{PermSwapGC}, we use the following list of permutations from $\PERMUTE{2n}$
\begin{equation} \label{eq:permSwap}
    L := \{ \darkBlue{p_1 p_2 \ldots p_{2n-1} p_{2n}} \; | \; \darkBlue{p_{2i-1}, p_{2i}} \in \{2i-1,  2i\} \text{ for } 1 \leq i \leq n \, \},
\end{equation}
where each bit $b_i$ is implemented by the pair $\darkBlue{(p_{2i-1}, p_{2i})}$.
We map each binary string of length $n$ to a permutation of length $2n$,
\begin{equation} \label{eq:permSwaptranslate}
    f(b_1 \cdots b_n) := \darkBlue{p_1 \ldots p_{2n-1} p_{2n}}, \text{ where } \darkBlue{p_{2i-1} p_{2i}} = 2i{-}1 \, 2i \text{ if } b_i = 0 \text{ or } 2i \, 2i{-}1 \text{ if } b_i = 1.
\end{equation}
% The pair $\darkBlue{(p_{2i-1}, p_{2i})}$ implements the single bit $b_i$. 
The elements of $L$ can only be modified using the intended swaps within the pair~$\darkBlue{(p_{2i-1}, p_{2i})}$. 
Similarly, the intended swaps are the only transpositions, substring reversals,  substring rotations, and jumps that can modify the elements of $L$. Note that unlike Theorem~\ref{thm:combos}, we need not redefine the set~$L$ for different operations.
\qed
\end{proof}

%For journal version :
%For length~$n$ of a binary string, each permutation in the set~$\PERMUTE{n}$ in \eqref{eq:permSwap} is $2n$. When we encode each element of the permutation by its binary representation, the total number of bits required to encode one permutation is~$2n\ceil{log_2(2n)}$, which is polynomial in $n$.
%NP membership

\begin{figure}[h]
    \centering
    \begin{subfigure}{0.49\textwidth}
        \centering
        \includegraphics[scale=1.0]{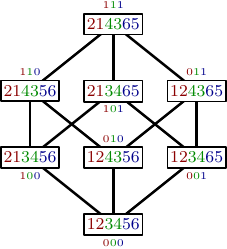}
        \caption{$\darkRed{p_1 p_2 \darkGreen{p_3 p_4} \darkBlue{p_5 p_6}}$ from \eqref{eq:permSwap} with swaps.}
        \label{fig:permute_swap}
    \end{subfigure}
    \hfill
    \begin{subfigure}{0.49\textwidth}   
        \centering
        \includegraphics[scale=1.0]{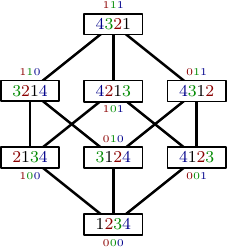}
        \caption{$p_1\darkRed{p_2} \darkGreen{p_3} \darkBlue{p_4}$ from jumps.}
        \label{fig:peakless-permute-jump}
    \end{subfigure}
    \caption{Gray code reductions to prove the  NP-hardness for permutations.}
    \label{fig:permute}
\end{figure}
A \emph{peak} in a permutation~$p_1 \cdots p_n$ is a triple $p_{i-1}p_ip_{i+1}$
with~$p_{i-1} < p_{i} > p_{i+1}$. A set of permutations without a peak, also called a \emph{peakless permutation} is denoted by ~$\PEAKLESSPERM{n}$.
We consider the following Gray coding problem on peakless permutations.

\begin{problem}    \problemtitle{{\textsc{PeakPermJumpGC}}}
    \probleminput{A list $L$ of permutations from $\PEAKLESSPERM{n}$.}
    \problemquestion{Is there a jump Gray code for $L$? 
    }
\end{problem}

\begin{theorem} \label{thm:peaklesspermute}\textsc{PeakPermJumpGC} is NP-hard.
\end{theorem}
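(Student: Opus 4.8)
The plan is to exhibit a Gray code reduction $f : \BINARY{n} \to \PEAKLESSPERM{n+1}$ and then invoke Theorem~\ref{thm:GrayCodeReduction} with $\BITSTRING$ as the source problem. The starting observation is a structural characterization of peakless permutations: a permutation $p_1 \cdots p_m$ has no peak exactly when its sequence of ascents and descents never has an ascent immediately followed by a descent, i.e. it has the shape $(\text{descents})^*(\text{ascents})^*$. Equivalently, a peakless permutation is a \emph{valley permutation} that strictly decreases down to its minimum value $1$ and then strictly increases. Hence a peakless permutation of $[m]$ is completely determined by the set $S \subseteq \{2,\dots,m\}$ of values lying to the left of $1$ (written in decreasing order), with the remaining values placed to the right of $1$ in increasing order, giving a bijection between $\PEAKLESSPERM{m}$ and subsets of $\{2,\dots,m\}$. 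I would define $f(b_1\cdots b_n)$ to be the valley permutation of $[n+1]$ whose left part is $\{\, i+1 : b_i = 1 \,\}$; this is plainly polynomial-time computable and matches the mapping illustrated in Figure~\ref{fig:peakless-permute-jump}.

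For the easy direction, flipping bit $b_i$ moves the single value $i+1$ from one side of $1$ to the other, and I claim this is realizable by exactly one jump. If $i+1$ crosses from the right part to the left part, it performs a left-jump over precisely the elements between its old and new positions, namely the right-part values smaller than $i+1$, then $1$, then the left-part values smaller than $i+1$. All of these are smaller than $i+1$, so the jump is legal, and it halts at the first larger left-part value, i.e. in sorted position, so the result is again a valley permutation in which only bit $b_i$ has changed. The case of moving from left to right is symmetric via a right-jump. This establishes the implication ``single bit $\Rightarrow$ single jump.''

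The substantive direction, which I expect to be the main obstacle, is ruling out \emph{parasitic} jumps: I must show that any jump between two peakless permutations changes exactly one bit, so that the image subgraph is precisely the hypercube $\CubeGraph{n}$ and the ``if and only if'' of Definition~\ref{def:HypercubeGrayCodeReduction} holds. The value $1$ cannot jump at all, since it has nothing smaller to jump over; a right-part value admits no right-jump (everything to its right inside the valley is larger) and a left-part value admits no left-jump, so every candidate jump must carry a value across $1$. For such a jump I would argue that any landing position other than the sorted position on the far side leaves an ascent immediately followed by a descent, reintroducing a peak; hence the only peakless outcomes are exactly the single-value crossings of the previous paragraph, each flipping one bit. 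The care required here, checking every intermediate stopping point and both jump directions so that no extra edge survives, is the delicate part of the argument. Once this is settled, $f$ is a Gray code reduction, so Theorem~\ref{thm:GrayCodeReduction} yields NP-hardness of $\PEAKPERMJUMP$; membership in NP is immediate, as the witnessing ordering $\pi$ is a polynomial-size certificate.
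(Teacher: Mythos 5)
Your proposal is correct and follows essentially the same route as the paper: the paper also reduces from $\BITSTRING$ via a Gray code reduction into valley (peakless) permutations, describing the same bijection as iteratively inserting each value $i$ at the leftmost or rightmost end according to its bit. Your write-up is in fact more careful than the paper's, which simply asserts the ``jump iff bitflip'' equivalence, whereas you explicitly rule out the parasitic jumps (non-crossing jumps and premature landing positions) by observing that they create a peak.
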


\begin{proof}
We use a Gray code reduction; see Figure~\ref{fig:permute}.
We define the list of peakless permutations as $\PEAKLESSPERM{n} := \{\darkBlue{p_1p_2\cdots p_n} \mid \nexists i \, \text{ where } \, p_{i-1} < p_{i} > p_{i+1}\}$.

We map a binary string~$b_2 \cdots b_{n}$ to a permutation~$\pi \in \PEAKLESSPERM{n}$  as follows: we start with 1 and then insert the values~$i = 2, \ldots, n$ one by one, either at the leftmost or rightmost position, depending on whether the bit $b_i$ is 1 or 0, respectively. Thus a bitstring of length~$n$ maps to a permutation of length~$n+1$ and
two permutations in~$\PEAKLESSPERM{n}$ differ in a jump if and only if the mapped bitstrings differ in a bitflip. 
\qed
\end{proof}

For $n \geq k$, let $\pi \in \PERMUTE{n}$ and $\tau \in \PERMUTE{k}$. 
We say that $\pi $ \emph{contains} $\tau$, if and only if $\pi = p_1 \cdots p_n$ contains a subpermutation $p_{i_1} \cdots p_{i_k}$ with the same relative order as the elements in $\tau$. 
Otherwise, $p$ \emph{avoids}~$\tau$. 
We denote $\PERMUTE{n}(\tau)$ as the set of all permutations of length~$n$ that avoids $\tau$. 
Moreover, $\PERMUTE{n}(\tau_1 \wedge \cdots \wedge \tau_{\ell} )$ is the set of permutations of length~$n$ avoiding each of the patterns $\tau_1, \ldots , \tau_{\ell}$. Gray codes for pattern-avoiding permutations appeared in \cite{dukes2008combinatorial}, \cite{baril2009more}, \cite{hartung2022combinatorial}.

\begin{remark} \label{rem:peakless}
Peakless permutations of $[n]$ are (132 $\wedge$ 231)-avoiding permutations of~$[n]$. 
\end{remark}
We extend Theorem~\ref{thm:peaklesspermute} to multiple permutation patterns consisting of ANDs.
\begin{corollary}
\label{cor:pattern_or_and}
For a set of permutation patterns~$\{\tau_1, \tau_2, \ldots, \tau_{\ell}\}$, if every $\tau_i$ contains a peak, 
then the Gray code problem 
is NP-hard for on jumps the list of permutations from~${\PERMUTE{n}(\tau_1 \wedge \ldots \wedge \tau_{\ell})}$.
\end{corollary}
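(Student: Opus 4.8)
The plan is to reuse the very reduction built in the proof of Theorem~\ref{thm:peaklesspermute}, and to argue that its image already lands inside the smaller class $\PERMUTE{n}(\tau_1 \wedge \cdots \wedge \tau_\ell)$ whenever each pattern $\tau_i$ contains a peak. Concretely, let $f$ be the map sending a bitstring to the peakless permutation obtained by inserting the values $2,3,\dots$ at the leftmost or rightmost position according to the bits. Theorem~\ref{thm:peaklesspermute} already shows that $f$ is a Gray code reduction for jumps into $\PEAKLESSPERM{n}$: two images differ by a jump exactly when the corresponding bitstrings differ by a single bit. Since whether two permutations differ by a jump is an intrinsic property of the pair and does not depend on the ambient object class, $f$ remains a Gray code reduction for jumps into the larger class $\PERMUTE{n}(\tau_1 \wedge \cdots \wedge \tau_\ell)$ provided only that every image $f(b)$ actually belongs to that class. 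Thus the whole corollary reduces to establishing the containment $\PEAKLESSPERM{n} \subseteq \PERMUTE{n}(\tau_1 \wedge \cdots \wedge \tau_\ell)$.

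To establish this containment I would first record the structural characterization that a permutation is peakless if and only if it is \emph{V-shaped}, i.e.\ strictly decreasing and then strictly increasing. This follows by reading the ascent/descent steps: a peak is precisely an ascent immediately followed by a descent, so being peakless means the step word has the form $D^a A^b$, a single decreasing run followed by a single increasing run. The crucial observation is then that any subsequence of a V-shaped sequence is again V-shaped: the chosen entries lying before the global minimum are strictly decreasing and those after it are strictly increasing, so the pattern they induce is again a valley and in particular is peakless. Hence every pattern contained in a peakless permutation is itself peakless; contrapositively, a permutation containing a peak cannot occur as a pattern of any peakless permutation. Since each $\tau_i$ contains a peak, no peakless permutation contains any $\tau_i$, which gives $\PEAKLESSPERM{n} \subseteq \PERMUTE{n}(\tau_1 \wedge \cdots \wedge \tau_\ell)$ and is consistent with the special case recorded in Remark~\ref{rem:peakless}.

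With the containment in hand, every image $f(b)$ lies in $\PERMUTE{n}(\tau_1 \wedge \cdots \wedge \tau_\ell)$, the map $f$ is a Gray code reduction for jumps into this class, and NP-hardness follows immediately from Theorem~\ref{thm:GrayCodeReduction}. I expect the only real subtlety --- and the step most easily gotten wrong --- to be the containment argument: it is tempting to claim that if $\pi$ contains a pattern with a peak then $\pi$ itself has a peak, but this is false, because the three positions forming the peak inside $\tau_i$ need not be consecutive in $\pi$. This is exactly why the argument must proceed through the global V-shape characterization and the ``subsequence of a valley is a valley'' fact, rather than through any local peak-transfer claim.
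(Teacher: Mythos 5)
Your proposal is correct and follows exactly the route the paper intends: the corollary is stated without an explicit proof, the implicit argument being that the jump reduction of Theorem~\ref{thm:peaklesspermute} already outputs only peakless permutations, and (as in Remark~\ref{rem:peakless}) peakless permutations avoid every pattern containing a peak, so the same list witnesses hardness for $\PERMUTE{n}(\tau_1 \wedge \cdots \wedge \tau_\ell)$. Your V-shape characterization and the ``subsequence of a valley is a valley'' step correctly supply the containment $\PEAKLESSPERM{n} \subseteq \PERMUTE{n}(\tau_1 \wedge \cdots \wedge \tau_\ell)$ that the paper leaves implicit, and your warning against the false local peak-transfer claim is well taken.
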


\begin{figure}[h]
 \centering
\begin{subfigure}{0.32\textwidth}
   \centering
    \includegraphics[width=\textwidth]{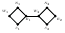}
    \caption{The diamond-path $D_2$.}
     \label{fig:diamonds_graphs}  
\end{subfigure}
\hfill
    \begin{subfigure}{0.32\textwidth}
       \centering      
       \includegraphics[width=\textwidth]{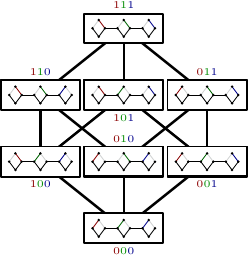}
        \caption{Spanning trees of $D_3$.}
        \label{fig:graphs1}
    \end{subfigure}
    \hfill
    \begin{subfigure}{0.32\textwidth}
        \centering
        \includegraphics[width=\textwidth]{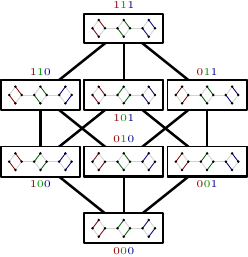}
        \caption{Perfect matchings of $D_3$.}
        \label{fig:graphs2}
    \end{subfigure}
    \caption{
    Gray code reductions for set partitions and graphs.
    }
    \label{fig:sp}
\end{figure}

%With this implementation, it is clear that every \emph{bitflip} corresponds to a \emph{move} of an element from one set to another in a set partition.

%Journal version extensions:
%The problem is in NP explain encoding.
%% Please use bibtex, 
\section{Problems on Graphs}
\label{sec:graphs}

% In this section we show hardness for Gray code problems on graphs. 

% \subsection{Diamond-path graph}
%\begin{wrapfigure}[8]{L}{.25\textwidth}
 %  \centering
  %   \includegraphics[width=.3\textwidth,page=1]{graphics/graphs.pdf}
   %  \caption{The diamond-path graph $D_2$.}
    % \label{fig:diamonds_graphs}
 %\end{wrapfigure}

In this section, we discuss some graph-related problems.
Our reductions are based on a particular graph namely, the diamond-path graph.
Informally, the $n$-th diamond-path graph consists of $n$ squares joined through edges. 
More formally, we define the graph $D_n$ by considering the vertex set $\{N,S,E,W\} \times [n]$, and the edge set $\{N_i E_i \cup N_i W_i  \cup  S_i E_i \cup S_i W_i \mid i \in [n]\} \cup   \{E_i W_{i+1} \mid i \in [n-1]\}$;
see Figure~\ref{fig:diamonds_graphs}.
%For $i \in [n]$, we call the set $\{(N,i), (E,i), (W,i), (S,i)\}$ the \emph{$i$th diamond}.  

\subsection{Spanning Trees}
\label{sec:graphs_spanning}
A \emph{spanning tree} of a graph $G$ is a connected acyclic subgraph of $G$. We denote the set of all spanning trees of a fixed graph $G$ by $\SPANNING{G}$.
We say that two spanning trees $T,T'$ of $G$ differ in an \emph{edge exchange} if they differ in exactly two edges; i.e., there exist edges $e \in T\setminus T'$ and $f\in T'\setminus T$ such that $T=T'+e-f$. Gray codes for spanning trees under edge exchanges have been widely studied from both the combinatorial and computational point of view~\cite{DBLP:journals/corr/abs-2304-08567,DBLP:conf/fun/MerinoMW22,MR762893,Smith}.

\begin{problem}    
\problemtitle{{$\STEDGEEXG$}}
    \probleminput{A list $L$ of spanning trees from~$\SPANNING{D_n}$.}
    \problemquestion{Is there an edge-exchange Gray code for~$L$? 
    }
\end{problem}
\begin{theorem} \label{thm:spanning1}
$\STEDGEEXG$ is NP-hard.
\end{theorem}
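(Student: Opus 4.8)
The plan is to construct a Gray code reduction $f : \BINARY{n} \to \SPANNING{D_n}$ in the sense of Definition~\ref{def:HypercubeGrayCodeReduction}, and then invoke Theorem~\ref{thm:GrayCodeReduction} together with membership in NP. The key geometric observation is that the diamond-path $D_n$ consists of $n$ diamonds (four-cycles $N_i E_i S_i W_i$) connected in series by bridge edges $E_i W_{i+1}$. Every bridge edge is a cut edge of $D_n$, so it must belong to every spanning tree. Within each diamond, a spanning tree of $D_n$ restricted to that four-cycle must delete exactly one of its four edges, since the rest of the tree already connects the diamond to its neighbours through the two bridge endpoints $E_i$ and $W_i$ (treating $W_1$ and $E_n$ as the path ends). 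The essential point I want is that the four vertices of the $i$-th diamond are already connected to the spine through $E_i$ and $W_i$, so to spanning-tree the four-cycle we must break exactly one of its four edges, giving exactly four local choices per diamond. To get a clean correspondence with a single bit, I will restrict the local choices so that each diamond encodes one bit: map $b_i = 0$ to deleting the edge $N_i E_i$ and $b_i = 1$ to deleting the edge $N_i W_i$ (so the only difference between the two states is which edge incident to $N_i$ is absent).

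First I would make the bijection precise: define $f(b_1 \cdots b_n)$ to be the spanning tree of $D_n$ that contains all $n-1$ bridge edges, all four diamond edges except one per diamond, where in diamond $i$ the deleted edge is $N_i E_i$ if $b_i = 0$ and $N_i W_i$ if $b_i = 1$. I would verify this is indeed a spanning tree (connected: the spine $W_1 E_1 W_2 E_2 \cdots$ is intact via the bridges and the two surviving diamond edges on the $S$ side, and $N_i$ hangs off via its one surviving edge; acyclic: exactly $|V|-1 = 4n - 1$ edges with connectivity). Computing $f$ is clearly polynomial in $n$. Next I would establish the equivalence required by Definition~\ref{def:HypercubeGrayCodeReduction}. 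An edge exchange that transforms $f(b)$ into another spanning tree removes some edge $e$ and adds some edge $f$ with $T - e + f$ still a tree; the added edge must be the unique missing edge of some diamond (bridges are never missing), and removing $e$ must reconnect the cycle it closes. I would argue that any single edge exchange keeping us inside the image $f(\BINARY{n})$ must act within one diamond and toggle exactly which edge at $N_i$ is deleted, i.e.\ it corresponds precisely to flipping one bit $b_i$; conversely flipping $b_i$ is realized by the edge exchange swapping $N_i E_i \leftrightarrow N_i W_i$.

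The main obstacle I anticipate is the forward direction of the biconditional: showing that two trees $f(b), f(b')$ differing by an edge exchange forces $b, b'$ to differ in exactly one bit. One must rule out edge exchanges that straddle two diamonds or that move an $S$-side edge, producing a tree \emph{not} of the form $f(b')$. The cleanest way is to note that among all spanning trees of $D_n$, the image of $f$ is exactly the set of trees in which (i) every bridge is present and (ii) in each diamond the deleted edge is one of the two edges incident to $N_i$; I then check that the only edge exchanges between two such trees are the bit-toggles, since any exchange adding the missing edge of diamond $i$ must remove another edge of that same diamond's four-cycle (the unique cycle created), and to stay in the image the removed edge must again be incident to $N_i$. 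This confines the exchange to diamond $i$ and to the two $N_i$-edges, matching a single bitflip. With the equivalence in hand, Theorem~\ref{thm:GrayCodeReduction} gives NP-hardness, and a candidate permutation $\pi$ is a polynomial-size, polynomial-time-checkable certificate, so the problem is in NP and hence NP-complete.
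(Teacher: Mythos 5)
Your construction is exactly the paper's: each spanning tree contains all bridge edges, both $S_i$-incident edges in every diamond, and exactly one of the two $N_i$-incident edges according to $b_i$, so $f(b)$ and $f(b')$ differ in an edge exchange precisely when $b$ and $b'$ differ in one bit. The proposal is correct and matches the paper's proof, just with more explicit verification of the tree property and the biconditional.
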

 
\begin{proof}[of Theorem~\ref{thm:spanning1}]
We use a Gray code reduction; see Figure \ref{fig:graphs1}.
For bitstrings in $\BINARY{n}$, we define the following list of spanning trees of $D_n$
\begin{equation*}
L := \{W_i S_i \mid i \in [n]\} \cup \{E_i S_i \mid i \in [n]\} \cup \{E_i W_{i+1} \mid i \in [n-1]
\cup \bigcup_{i : b_i = 0} \{W_i N_i\} \cup \bigcup_{i : b_i = 1} \{E_i N_i\}.
\end{equation*}
In other words, a spanning tree in~$L$ contains all the edges between diamonds, the edges that are incident on $S_i$ vertices, and to join $N_i$, we use either the edge $W_i$ or $E_i$, depending on the value of $b_i$.
Therefore, two spanning trees~$T_b, T_{b'} \in L$ differ in an edge exchange if and only if $b, b' \in \BINARY{n}$ differ in a bitflip.
\qed
\end{proof}

\subsection{Perfect Matchings}

A perfect matching of a graph is a set of edges $M \subseteq E$ such that every vertex is incident to exactly one edge in $M$.
We denote the set of all perfect matchings of a fixed graph $G$ by $\PMATCHING{G}$.
We say that two perfect matchings $M,M'$ of $G$ differ in an \emph{alternating cycle} if their symmetric difference forms a cycle in $G$. 
Every graph with a perfect matching has an alternating cycle Gray code for $\PMATCHING{G}$ that can be efficiently computed~\cite{DBLP:journals/corr/abs-2304-08567,MR762893}.

\begin{problem}    \problemtitle{{$\PMALTCYCLE$}}
    \probleminput{A list $L$ of perfect matchings from~$\PMATCHING{D_n}$.}
    \problemquestion{Is there an alternating cycle Gray code for~$L$? 
    }
\end{problem}
\begin{theorem} \label{thm:pmatching}
$\PMALTCYCLE$ is NP-hard.
\end{theorem}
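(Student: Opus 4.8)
The plan is to give a Gray code reduction from $\BITSTRING$, exactly as in Theorem~\ref{thm:spanning1}, exploiting the rigid structure of perfect matchings of the diamond-path graph; see Figure~\ref{fig:graphs2}. The key structural observation, which I would establish first, is that the inter-diamond edges $E_iW_{i+1}$ can never appear in a perfect matching of $D_n$. Indeed, the vertices $N_i$ and $S_i$ each have only the two neighbors $E_i$ and $W_i$, so any perfect matching must cover both of them, and the only way to do this is to use both $E_i$ and $W_i$ inside diamond $i$. Consequently every $E_i$ (and symmetrically every $W_{i+1}$) is always matched within its own diamond, which forces each diamond to be matched internally and independently, in exactly one of the two ways $\{N_iE_i, S_iW_i\}$ or $\{N_iW_i, S_iE_i\}$.

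This immediately yields a bijection between $\PMATCHING{D_n}$ and $\BINARY{n}$ by reading off one bit per diamond. I would then define the reduction $f : \BINARY{n} \to \PMATCHING{D_n}$ by letting diamond $i$ take the matching $\{N_iE_i, S_iW_i\}$ when $b_i = 0$ and $\{N_iW_i, S_iE_i\}$ when $b_i = 1$; this is clearly computable in polynomial time in $n$.

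It then remains to verify the alternating-cycle condition. If $b$ and $b'$ differ in a single coordinate $i$, then $f(b)$ and $f(b')$ agree on every diamond except diamond $i$, and their symmetric difference is precisely the four edges $N_iE_i, S_iW_i, N_iW_i, S_iE_i$, which form the single $4$-cycle $N_i E_i S_i W_i$; hence the two matchings differ by an alternating cycle. Conversely, if $b$ and $b'$ differ in $t \geq 2$ coordinates, the symmetric difference of $f(b)$ and $f(b')$ is a disjoint union of $t$ such $4$-cycles, which is not a single cycle, so they do not differ by an alternating cycle. Thus $b$ and $b'$ differ in a single bit if and only if $f(b)$ and $f(b')$ differ by an alternating cycle, so $f$ is a Gray code reduction and NP-hardness follows from Theorem~\ref{thm:GrayCodeReduction}.

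I expect the only nontrivial step to be the structural characterization of the perfect matchings of $D_n$: once it is clear that the inter-diamond edges are forced out and each diamond behaves as an independent binary gadget with its own $4$-cycle, the definition of $f$ and the alternating-cycle analysis are routine and mirror the spanning-tree argument of Theorem~\ref{thm:spanning1}.
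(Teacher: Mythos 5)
Your proposal is correct and follows essentially the same route as the paper: a Gray code reduction from $\BITSTRING$ in which each diamond of $D_n$ encodes one bit via its two internal perfect matchings, with a single bitflip corresponding to the $4$-cycle $N_i E_i S_i W_i$. The only difference is that you additionally spell out the structural fact that inter-diamond edges are excluded from every perfect matching (which the paper leaves implicit) and you swap the $0/1$ labelling convention, neither of which changes the argument.
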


% The flip graph $\flipGraph{\PMATCHING{G}}{\ALTCYCLE{G}}$ is exactly the skeleton of the following polytope~\cite{MR371732}:
% \[\mathop{conv}(\{\ind_M \in \{0,1\}^E  \mid \text{$M$ is a perfect matching of $G$}  \}) = \left\{ \sum_{M \in \PMATCHING{G}} \lambda_M\ind_M  \in \{0,1\}^E \mid \text{$\sum_{M\in \PMATCHING{G}}\lambda_M=1$.}  \right\}\]

\begin{proof}[of Theorem~\ref{thm:pmatching}]
We use Gray code reductions; see Figure \ref{fig:graphs2}.
For bitstrings in $\BINARY{n}$, we define the following list of perfect matchings of $D_n$
\begin{equation*}
L := \bigcup_{i : b_i = 0} \{W_i N_i, E_i S_i\} \cup \bigcup_{i : b_i = 1} \{E_i N_i, W_i S_i\}.
\end{equation*}
There are two possible choices of perfect matchings for every diamond in~$D_n$.
Therefore, two perfect matchings~$M_b, M_{b'} \in L$ differ in an alternating cycle if and only if $b, b' \in \BINARY{n}$ differ in a bitflip.
\qed
\end{proof}
Theorems~\ref{thm:spanning1} and~\ref{thm:pmatching} also extend the hardness when we are given the host graph $G$ as input and ask for edge-exchange or alternating cycles Gray codes for lists of spanning trees or perfect matchings, respectively, of $G$.

\section{Final Remarks}
\label{sec:final}
We proved that the Gray coding problems are NP-complete for various classical objects.
Future work could involve investigating optimization and approximation variants.
Furthermore, our techniques apply to many other objects, for example, those involving geometry, that we plan to explore in the full version of this paper.
% For example, the following questions are natural for a subset of $n$-bit binary strings.
% \begin{itemize}
%     \item What is the longest Gray code that can be constructed using a subset of the provided strings?
%     \item What is complexity of finding a Gray code that uses a given fraction of the provided strings?
% \end{itemize}

We note that we were unable to establish NP-hardness for certain subset problems using grid or hypercube reductions.
These problems include operations that do not support \emph{independent involutions}.
In other words, there is no way to make and unmake multiple local changes, which is a hallmark of hypercube reductions.
It is also important to note that some subset problems are poly-time solvable (e.g., those associated with de Bruijn sequences and universal cycles).

% Specific challenges were born from two of the most stubborn Gray code problems that were solved during the past decade:
% the middle levels~\cite{mutze2016proof} and sigma-tau problems~\cite{sawada2019solving}.
% In the first case, there is a global condition (i.e., the number of $1$s) that prevents involutions from acting independently.
% In the second case, every operation changes the first symbol, so the operations are again dependent.
% We hope to investigate these issues in future work.

\bibliography{refs}
\end{document}